\newcommand{\XORSampleprime}{\ensuremath{\mathsf{XORSample}'}}
\newcommand{\BoundedSAT}{\ensuremath{\mathsf{BSAT}}}
\newcommand{\UniformWitness}{\ensuremath{\mathsf{UniWit}}}
\newcommand{\UniGen}{\ensuremath{\mathsf{UniGen}}}
\newcommand{\PAWS}{\ensuremath{\mathsf{PAWS}}}
\newcommand{\killthis}[1]{}
\newcommand{\prob}{\ensuremath{\mathsf{Pr}}}
\newcommand{\expect}{\ensuremath{\mathsf{E}}}
\newcommand{\var}{\ensuremath{\mathsf{V}}}
\newcommand{\NP}{\ensuremath{\mathsf{NP}}}
\newcommand{\SAT}{\ensuremath{\mathsf{SAT}}}
\newcommand{\sharpSATTool}{\ensuremath{\mathsf{sharpSAT}}}
\newcommand{\approxCounter}{\ensuremath{\mathsf{ApproxModelCounter}}}
\newcommand{\approxMC}{\ensuremath{\mathsf{ApproxMC}}}
\newcommand{\ComputeKappaAndPivot}{\ensuremath{\mathsf{ComputeKappaPivot}}}
\newcommand{\US}{\ensuremath{\mathsf{US}}}
\newcommand{\UniWit}{\ensuremath{\mathsf{UniWit}}}
\newtheoremstyle{newstyle}      
  {.5\baselineskip\@plus.2\baselineskip\@minus.2\baselineskip}%
  {.5\baselineskip\@plus.2\baselineskip\@minus.2\baselineskip}%
{\mdseries} %
{} %
{\bfseries} %
{.} %
{ } %
{} %
\newtheorem{lemma}{Lemma}
\newtheorem{theorem}{Theorem}
\begin{document}
\conferenceinfo{DAC}{'14, June 01-05 2014, San Francisco, CA, USA}		
\title{Balancing Scalability and Uniformity in SAT Witness Generator
\thanks{\scriptsize This work was supported in part by NSF grants CNS 1049862 and CCF-1139011,
by NSF Expeditions in Computing project "ExCAPE: Expeditions in Computer
Augmented Program Engineering", by BSF grant 9800096, by gift from
Intel, by a grant from Board of Research in Nuclear Sciences, India, and by the Shared University Grid at Rice funded by NSF under Grant EIA-0216467, and a partnership between Rice University, Sun Microsystems, and Sigma Solutions, Inc. }}

\numberofauthors{2} 

\author{
\alignauthor
Supratik Chakraborty\\
       \affaddr{Indian Institute of Technology, Bombay}\\
       \email{supratik@cse.iitb.ac.in}
\alignauthor
Kuldeep S. Meel, Moshe Y. Vardi\\
       \affaddr{Rice University}\\
       \email{kuldeep@rice.edu,vardi@cs.rice.edu}
}

\maketitle
\section*{ABSTRACT}
Constrained-random simulation is the predominant approach used in the industry for functional verification of complex digital designs.  The effectiveness of this approach depends on two key factors: the quality of constraints used to generate test vectors, and the randomness of solutions generated from a given set of constraints.  In this paper, we focus on the second problem, and present an algorithm that significantly improves the state-of-the-art of (almost-)uniform generation of solutions of large Boolean constraints. Our algorithm provides strong theoretical guarantees on the uniformity of generated solutions and scales to problems involving hundreds of thousands of variables.

\section{Introduction}
Functional verification constitutes one of the most challenging and
time-consuming steps in the design of modern digital systems.  The
primary objective of functional verification is to expose design bugs
early in the design cycle.  Among various techniques available for
this purpose, those based on simulation overwhelmingly dominate
industrial practice.  In a typical simulation-based functional
verification exercise, a gate-level or RTL model of the circuit is
simulated for a large number of cycles with specific input patterns.
The values at observable outputs, as computed by the simulator, are
then compared against their expected values, and any discrepancy is
flagged as manifestaton of a bug.  The state of simulation technology
today is mature enough to allow simulation of large designs within
reasonable time using modest computational resources.  Generating
input patterns that exercise diverse corners of the
design's behavior space, however, remains a challenging problem
\cite{BF01}.

In recent years, constrained-random simulation (also called
constrained-random verification, or CRV) \cite{NavRim07} has emerged
as a practical approach to address the problem of simulating designs
with ``random enough'' input patterns.  In CRV, the verification
engineer declaratively specifies a set of constraints on the values of
circuit inputs.  Typically, these constraints are obtained from usage
requirements, environmental constraints, constraints on operating
conditions and the like.  A constraint solver is then used to generate
random values for the circuit inputs satisfying the constraints.
Since the distribution of errors in the design's behavior space is not
known \emph{a priori}, every solution to the set of constraints is as
likely to discover a bug as any other solution.  It is therefore
important to sample the space of all solutions uniformly or
almost-uniformly (defined formally below) at random.  Unfortunately,
guaranteeing uniformity poses significant technical challenges when
scaling to large problem sizes.  This has been repeatedly noted in the
literature (see, for example, \cite{DKBZ09,PMB08,Kitchen2010markov})
and also confirmed by industry practitioners\footnote{Private
  communication: R. Kurshan}.  %
The difficulties of generating solutions with guarantees of uniformity
have even prompted researchers to propose alternative techniques for
generating input patterns~\cite{DKBZ09,PMB08}.  This paper takes a
step towards remedying this situation.  Specifically, we describe an
algorithm for generating solutions to a set of Boolean
constraints, with stronger guarantees on uniformity and with higher
scalability in practice than that achieved earlier.

Since constraints that arise in CRV of digital circuits are encodable
as Boolean formulae, we focus on uniform generation of solutions of
Boolean formulae.  Henceforth, we call such solutions \emph{{\SAT}
  witnesses}.  Besides its usefulness in CRV and in other
applications~\cite{Bacchus2003,Roth1996}, uniform generation of SAT
witnesses has had strong theoretical interest as well~\cite{Jerr}.
Most prior approaches to solving this problem belong to one of two
categories: those that focus on strong guarantees of uniformity but
scale poorly in practice (examples
being~\cite{Yuan2004,Bellare98uniformgeneration,Jerr}),
and those that provide practical heuristics to scale to large problem
instances with weak or no guarantees of uniformity (examples
being~\cite{dechter2002,Kitchen2010markov,Selman-Sampling})).
In~\cite{SKV13}, Chakraborty, Meel and Vardi attempted to bridge these
extremes through an algorithm called {\UniWit}.  More recently, Ermon,
Gomes, Sabharwal and Selman~\cite{EGSS13} proposed an algorithm called
{\PAWS} for sampling witnesses from discrete distributions over large
dimensions.  While {\PAWS} is designed to work with any discrete
distribution specified through a graphical model, for purposes of this
paper, we focus only on distributions that assign equal weight to
every assignment.  For such distributions, both {\PAWS} and {\UniWit}
represent alternative (albeit related) approaches to solve the same
problem -- that of uniform generation of SAT witnesses.
Unfortunately, both algorithms suffer from inherent limitations that
make it difficult to scale them to Boolean constraints with tens of
thousands of variables and beyond.  In addition, the guarantees
provided by these algorithms (in the context of uniform generation of
SAT witnesses) are weaker than what one would desire in practice.

In this paper, we propose an algorithm called {\UniGen} that addresses
some of the deficiencies of {\UniWit} and {\PAWS}.  This enables us to
improve both the theoretical guarantees \emph{and} practical
performance vis-a-vis earlier algorithms in the context of uniform
generation of SAT witnesses.  {\UniGen} is the first algorithm to
provide strong two-sided guarantees of almost-uniformity, while
scaling to problems involving hundreds of thousands of variables.  We
also improve upon the success probability of the earlier algorithms
significantly, both in theory and as evidenced by our experiments.

\section{Notation and Preliminaries}\label{sec:prelims}

Let $F$ be a Boolean formula in conjunctive normal form (CNF), and let
$X$ be the set of variables appearing in $F$.  The set $X$ is called
the \emph{support} of $F$.  A \emph{satisfying assignment} or
\emph{witness} of $F$ is an assignment of truth values to variables in
its support such that $F$ evaluates to true.  We denote the set of all
witnesses of $F$ as $R_F$.  Let ${\mathcal D} \subseteq X$ be a subset
of the support such that there are no two satisfying assignments of
$F$ that differ only in the truth values of variables in ${\mathcal
  D}$.  In other words, in every satisfying assignment of $F$, the
truth values of variables in $X\setminus {\mathcal D}$ uniquely
determine the truth value of every variable in ${\mathcal D}$.  The
set ${\mathcal D}$ is called a \emph{dependent} support of $F$, and
$X\setminus \mathcal{D}$ is called an \emph{independent} support of
$F$.  Note that there may be more than one independent supports of $F$.
For example, $(a \vee \neg b) \wedge (\neg a \vee b)$ has three
independent supports: $\{a\}$, $\{b\}$ and $\{a, b\}$. Clearly, if
${\mathcal{I}}$ is an independent support of $F$, so is every superset
of ${\mathcal{I}}$.  For notational convenience, whenever the formula
$F$ is clear from the context, we omit mentioning it.

We use $\prob\left[X: {\cal P} \right]$ to denote the probability of
outcome $X$ when sampling from a probability space ${\cal P}$.  For
notational clarity, we omit ${\cal P}$ when it is clear from the
context.  The expected value of the outcome $X$ is denoted
$\expect\left[X\right]$.  Given a Boolean formula $F$, a
\emph{probabilistic generator} of witnesses of $F$ is a probabilistic
algorithm that generates a random witness in $R_F$.  A \emph{uniform
  generator} $\mathcal{G}^{u}(\cdot)$ is a probabilistic generator
that guarantees $\prob\left[\mathcal{G}^{u}(F) = y\right] = 1/|R_F|$,
for every $y \in R_F$.  An \emph{almost-uniform generator}
$\mathcal{G}^{au}(\cdot, \cdot)$ ensures that for every $y \in R_F$,
we have $\frac{1}{(1+ \varepsilon)|R_F|}$ $\le$
$\prob\left[\mathcal{G}^{au}(F,\varepsilon) = y\right]\le$ $\frac{1 +
  \varepsilon}{|R_F|}$, where $\varepsilon > 0$ is the specified
\emph{tolerance}.  A \emph{near-uniform generator}
$\mathcal{G}^{nu}(\cdot)$ further relaxes the guarantee of uniformity,
and ensures that $\prob\left[\mathcal{G}^{nu}(F) = y\right] \geq
c/|R_F|$ for a constant $c$, where $0 < c \leq 1$.  Probabilistic
generators are allowed to occasionally ``fail" in the sense that no
witness may be returned even if $R_F$ is non-empty.  The failure
probability for such generators must be bounded by a constant strictly
less than 1.  The algorithm presented in this paper falls in the
category of almost-uniform generators.  An idea closely related to
that of almost-uniform generation, and used in a key manner in our
algorithm, is \emph{approximate model counting}.  Given a CNF formula
$F$, an \emph{exact model counter} returns the size of $R_F$.  An
\emph{approximate model counter} ${\approxMC}(\cdot, \cdot, \cdot)$
relaxes this requirement to some extent.  Given a CNF formula $F$, a
tolerance $\varepsilon > 0$ and a confidence $1-\delta \in (0, 1]$,
  and approximate model counter ensures that
  $\prob[\frac{|R_F|}{1+\varepsilon} \le {\approxMC}(F, \varepsilon,
    1-\delta) \le (1+\varepsilon)|R_F|] \ge 1-\delta$.
 
A special class of hash functions, called
\emph{$r$-wise independent} hash functions, play a crucial role in
our work.  Let $n, m$ and $r$ be positive integers, and let
$H(n,m,r)$ denote a family of $r$-wise independent hash functions
mapping $\{0, 1\}^n$ to $\{0, 1\}^m$.  We use $h \xleftarrow{R}
H(n,m,r)$ to denote the probability space obtained by choosing a
hash function $h$ uniformly at random from $H(n,m,r)$.  The property
of $r$-wise independence guarantees that for all $\alpha_1, \ldots
\alpha_r \in \{0,1\}^m $ and for all distinct $y_1, \ldots y_r \in
\{0,1\}^n$, $\prob\left[\bigwedge_{i=1}^r h(y_i) = \alpha_i\right.$
$\left.: h \xleftarrow{R} H(n, m, r)\right] = 2^{-mr}$.  For every
$\alpha \in \{0, 1\}^m$ and $h \in H(n, m, r)$, let $h^{-1}(\alpha)$
denote the set $\{y \in \{0, 1\}^n \mid h(y) = \alpha\}$.  Given
$R_F \subseteq \{0, 1\}^n$ and $h \in H(n, m, r)$, we use $R_{F, h,
\alpha}$ to denote the set $R_F \cap h^{-1}(\alpha)$.  If we keep
$h$ fixed and let $\alpha$ range over $\{0, 1\}^m$, the sets $R_{F,
h, \alpha}$ form a partition of $R_F$. 
\section{Related Work}\label{sec:related_work}

Marrying scalability with strong guarantees of uniformity has been the
holy grail of algorithms that sample from solutions of constraint
systems.  The literature bears testimony to the significant tension
between these objectives when designing random generators of SAT
witnesses.  Earlier work in this area either provide strong
theoretical guarantees at the cost of scalability, or remedy the
scalability problem at the cost of guarantees of
uniformity.  More recently, however, there have been efforts to bridge
these two extremes.

Bellare, Goldreich and Petrank~\cite{Bellare98uniformgeneration}
showed that a provably uniform generator of SAT witnesses can be
designed in theory to run in probabilistic polynomial time relative to
an {\NP} oracle.  Unfortunately, it was shown in~\cite{SKV13} that
this algorithm does not scale beyond formulae with few tens of
variables in practice.  Weighted binary decision diagrams (BDD) have
been used in~\cite{Yuan2004} to sample uniformly from SAT witnesses.
However, BDD-based techniques are known to suffer from scalability
problems~\cite{Kitchen2010markov}.  Adapted BDD-based techniques with
improved performance were proposed in~\cite{kukula2000}; however, the
scalability was achieved at the cost of guarantees of uniformity.
Random seeding of DPLL SAT solvers~\cite{chaff2001} has been shown to
offer performance, although the generated distributions of witnesses
can be highly skewed~\cite{Kitchen2010markov}.

Markov Chain Monte Carlo methods (also called MCMC
methods)~\cite{Kitchen2010markov,wei2005new} are widely considered to
be a practical way to sample from a distribution of solutions.
Several MCMC algorithms, such as those based on simulated annealing,
Metropolis-Hastings algorithm and the like, have been studied
extensively in the literature~\cite{Kirkpatrick83,madras2002}.  While
MCMC methods guarantee eventual convergence to a target distribution
under mild requirements, convergence is often impractically slow in
practice.  
The work
of~\cite{wei2005new,Kitchen2010markov} proposed several such
adaptations for MCMC-based sampling in the context of
constrained-random verification.  Unfortunately, most of these
adaptations are heuristic in nature, and do not preserve theoretical
guarantees of uniformity.  
constraints, thereby increasing constraint-solving time.  
Sampling
techniques based on interval-propagation and belief networks have been
proposed in~\cite{dechter2002,gogate2006,iyer2003race}.  The
simplicity of these approaches lend scalability to the techniques, but
the generated distributions can deviate significantly from the uniform
distribution, as shown in~\cite{KitKue2007}.

Sampling techniques based on hashing were originally pioneered by
Sipser~\cite{Sipser83}, and have been used subsequently by several
researchers~\cite{Bellare98uniformgeneration,Gomes-Sampling,SKV13}.
The core idea in hashing-based sampling is to use $r$-wise independent
hash functions (for a suitable value of $r$) to randomly partition the
space of witnesses into ``small cells" of roughly equal size, and then
randomly pick a solution from a randomly chosen cell.  The algorithm
of Bellare et al.~referred to above uses this idea with $n$-wise
independent algebraic hash functions (where $n$ denotes the size of
the support of $F$).  As noted above, their algorithm scales very
poorly in practice.  Gomes, Sabharwal and Selman used $3$-wise
independent linear hash functions in~\cite{Gomes-Sampling} to design
{\XORSampleprime}, a near-uniform generator of SAT witnesses.
Nevertheless, to realize the guarantee of near-uniformity, their
algorithm requires the user to provide difficult-to-estimate input
parameters.  Although {\XORSampleprime} has been shown to scale to
constraints involving a few thousand variables, Gomes et
al. acknowledge the difficulty of scaling their algorithm to much
larger problem sizes without sacrificing theoretical
guarantees~\cite{Gomes-Sampling}.

Recently, Chakraborty, Meel and Vardi~\cite{SKV13} proposed a new
hashing-based SAT witness generator, called {\UniformWitness}, that
represents a small but significant step towards marrying the
conflicting goals of scalability and guarantees of uniformity.  Like
{\XORSampleprime}, the {\UniformWitness} algorithm uses $3$-wise
independent linear hashing functions.  Unlike {\XORSampleprime},
however, the guarantee of near-uniformity of witnesses generated by
{\UniWit} does not depend on difficult-to-estimate input parameters.
In~\cite{SKV13}, {\UniformWitness} has been shown to scale to formulas
with several thousand variables.  In addition, Chakraborty et al
proposed a heuristic called ``leap-frogging'' that allows
{\UniformWitness} to scale even further -- to tens of thousands of
variables~\cite{SKV13}.  Unfortunately, the guarantees of
near-uniformity can no longer be established for {\UniformWitness}
with ``leap-frogging''.  More recently, Ermon et al.~\cite{EGSS13} 
proposed a hashing-based algorithm called {\PAWS}
for sampling from a distribution defined over a discrete set using a
graphical model.  While the algorithm presented in this paper has some
similarities with {\PAWS}, there are significant differences as well.
Specifically, our algorithm provides much stronger theoretical
guarantees vis-a-vis those offered by {\PAWS} in the context of
uniform generation of SAT witness.  In addition, our algorithm scales
to hundreds of thousands of variables while preserving the theoretical
guarantees.  {\PAWS} faces the same scalability hurdles as
{\UniformWitness}, and is unlikely to scale beyond a few thousand
variables without heuristic adapatations that compromise its
guarantees.

\section{The UniGen Algorithm}\label{sec:algorithm}
The new algorithm, called {\UniGen}, falls in the category of
hashing-based almost-uniform generators.  {\UniGen}
shares some features with earlier hashing-based algorithms such as
{\XORSampleprime}~\cite{Gomes-Sampling},
{\UniformWitness}~\cite{SKV13} and {\PAWS}~\cite{EGSS13},
but there are key differences that allow {\UniGen} to
significantly outperform these earlier algorithms, both in terms of
theoretical guarantees and measured performance.

Given a CNF formula $F$, we use a family of $3$-independent
hash functions to randomly partition the set, $R_F$, of witnesses of
$F$.  Let $h: \{0, 1\}^n \rightarrow \{0, 1\}^m$ be a hash function in
the family, and let $y$ be a vector in $\{0, 1\}^n$.  Let $h(y)[i]$
denote the $i^{th}$ component of the vector obtained by applying $h$
to $y$.  The family of hash functions of interest is defined as
$\{h(y) \mid h(y)[i] = a_{i,0} \oplus (\bigoplus_{k=1}^n a_{i,k}\cdot
y[k]), a_{i,j} \in \{0, 1\}, 1 \leq i \le m, 0 \leq j \leq n\}$, where
$\oplus$ denotes the xor operation.  By choosing values of $a_{i,j}$
randomly and independently, we can effectively choose a random hash
function from the family.  It has been shown in~\cite{Gomes-Sampling}
that this family of hash functions is $3$-independent.  Following
notation introduced in Section~\ref{sec:prelims}, we call this family
$H_{xor}(n, m, 3)$.

While $H_{xor}(n, m, 3)$ was used earlier in {\XORSampleprime},
{\PAWS}, and (in a variant of) {\UniformWitness}, there is a
fundamental difference in the way we use it in {\UniGen}.  Let $X =
\{x_1, x_2, \ldots x_{|X|}\}$ be the set of variables of $F$.  Given
$m > 0$, the algorithms {\XORSampleprime}, {\PAWS} and {\UniformWitness}
partition $R_F$ by randomly choosing $h \in H_{xor}(|X|, m, 3)$ and
$\alpha \in \{0, 1\}^m$, and by seeking witnesses of $F$ conjoined
with $\bigwedge_{i=1}^m\left(h(x_1, \ldots x_{|X|})[i] \leftrightarrow
\alpha[i]\right)$.  By choosing a random $h(x_1, \ldots x_{|X|}) \in
H_{xor}(|X|, m, 3)$, the set of \emph{all} assignments to variables in
$X$ (regardless of whether they are witnesses of $F$) is partitioned
randomly.  This, in turn, ensures that the set of satisfying
assignments of $F$ is also partitioned randomly.  Each conjunctive
constraint of the form $(h(x_1 \ldots x_{|X|})[i]$ $\leftrightarrow$
$\alpha[i])$ is an xor of a subset of variables of $X$ and
$\alpha[i]$, and is called an \emph{xor-clause}.  Observe that the
expected number of variables in each such xor-clause is approximately
$|X|/2$.  It is well-known (see, for example~\cite{ghss07:shortxors})
that the difficulty of checking satisfiability of a CNF formula with
xor-clauses grows significantly with the number of variables per
xor-clause.  It is therefore extremely difficult to scale
{\XORSampleprime}, {\PAWS} or {\UniformWitness} to problems involving
hundreds of thousands of variables.  In~\cite{SKV13}, an alternative
family of linear hash functions is proposed to be used with
{\UniformWitness}.  Unfortunately, this also uses $|X|/2$ variables
per xor-clause on average, and suffers from the same problem.
In~\cite{ghss07:shortxors}, a variant of $H_{xor}(|X|, m, 3)$ is used,
wherein each variable in $X$ is chosen to be in an xor-clause with a
small probability $q$ ($< 0.5$).  This mitigates the performace
bottleneck significantly, but theoretical guarantees of (near-)uniformity 
are lost.

We address the above problem in {\UniGen} by making two important
observations: (i) an independent support ${\mathcal{I}}$ of $F$ is
often far smaller (sometimes by a few orders of magnitude) than $X$,
and (ii) since the value of every variable in $X\setminus
{\mathcal{I}}$ in a satisfying assignment of $F$ is uniquely
determined by the values of variables in ${\mathcal{I}}$, the set
$R_F$ can be randomly partitioned by randomly partitioning its
projection on ${\mathcal{I}}$.  This motivates us to design an
almost-uniform generator that accepts a subset $S$ of the support of
$F$ as an additional input.  We call $S$ the set of \emph{sampling
variables} of $F$, and intend to use an independent support of $F$
(not necessarily a minimal one) as the value of $S$ in any invocation
of the generator.  Without loss of generality, let $S = \{x_1, \ldots
x_{|S|}\}$, where $|S| \leq |X|$.  The set $R_F$ can now be
partitioned by randomly choosing $h \in H_{xor}(|S|, m, 3)$ and
$\alpha \in \{0, 1\}^m$, and by seeking solutions of $F \wedge
\bigwedge_{i=1}^m\left(h(x_1, \ldots x_{|S|})[i] \leftrightarrow
\alpha[i]\right)$.  If $|S| \ll |X|$ (as is often the case in our
experience), the expected number of variables per xor-clause is
significantly reduced. This makes satisfiability checking easier, and
allows scaling to much larger problem sizes than otherwise possible.
It is natural to ask if finding an independent support of a CNF
formula $F$ is computationally easy.  While an algorithmic solution to
this problem is beyond the scope of this paper, our experience
indicates that a small, not necessarily minimal, independent support
can often be easily determined from the source domain from which the
CNF formula $F$ is derived.  For example, when a non-CNF formula $G$
is converted to an equisatisfiable CNF formula $F$ using Tseitin
encoding, the variables introduced by the encoding form a dependent
support of $F$.

The effectiveness of a hashing-based probabilistic generator depends
on its ability to quickly partition the set $R_F$ into ``small'' and
``roughly equal'' sized random cells.  This, in turn, depends on the
parameter $m$ used in the choice of the hash function family $H(n, m,
r)$.  A high value of $m$ leads to skewed distributions of sizes of
cells, while a low value of $m$ leads to cells that are not small
enough.  The best choice of $m$ depends on $|R_F|$, which is not known
\emph{a priori}.  Different algorithms therefore use different
techniques to estimate a value of $m$.  In {\XORSampleprime}, this is
achieved by requiring the user to provide some difficult-to-estimate
input parameters.  In {\UniformWitness}, the algorithm sequentially
iterates over values of $m$ until a good enough value is found.  The
approach of {\PAWS} comes closest to our, although there are crucial
differences.  In both {\PAWS} and {\UniGen}, an approximate model
counter is first used to estimate $|R_F|$ within a specified tolerance
and with a specified confidence.  This estimate, along with a
user-provided parameter, is then used to determine a \emph{unique}
value of $m$ in {\PAWS}.  Unfortunately, this does not facilitate
proving that {\PAWS} is an almost-uniform generator.  Instead, Ermon,
et al.~show that {\PAWS} behaves like an almost-uniform generator with 
probability greater than $1-\delta$, for a suitable $\delta$ that depends 
on difficult-to-estimate input parameters.  In contrast, we use the estimate
of $|R_F|$ to determine a \emph{small range} of candidate values of $m$.  
This allows us to prove that {\UniGen} is almost-uniform generator with 
confidence $1$.

\begin{algorithm}[htb]
\caption{UniGen$(F, \varepsilon, S)$}
\label{alg:unigen}
\begin{algorithmic}[1]
\Statex \hspace*{-0.6cm}/*Assume $S = \{x_1, \ldots x_{|S|}\}$ is an independent support of $F$, and $\varepsilon > 1.71$ */ 
\State $(\kappa, \mathrm{pivot}) \gets \ComputeKappaAndPivot(\varepsilon)$;
\State $\mathrm{hiThresh} \gets 1 + (1+\kappa)\mathrm{pivot}$;
\State $\mathrm{loThresh} \gets \frac{1}{1+\kappa} \mathrm{pivot}$;
\State $Y \gets \BoundedSAT(F, \mathrm{hiThresh})$;
\If {$\left(|Y| \le \mathrm{hiThresh} \right)$}
	 \State Let $y_1, \ldots y_{|Y|}$ be the elements of $Y$;
 	 \State Choose $j$ at random from $\{1,\ldots |Y|\}$; \Return $y_j$;
\Else
	 \State $C \gets \approxCounter(F, 0.8, 0.8);$ 
	 \State $q \gets \lceil\log C + \log 1.8- \log \mathrm{pivot}\rceil$;
 	 \State $i \gets q-4$;
	 \Repeat
     	\State $i \gets i+1$;
 		\State Choose $h$ at random from $H_{xor}(|S|, i, 3)$;
 		\State Choose $\alpha$ at random from $\{0, 1\}^{i}$;
		\State $Y \gets \BoundedSAT(F \wedge (h(x_1, \ldots x_{|S|}) = \alpha),\mathrm{hiThresh})$;
	\Until {$\left(\mathrm{loThresh} \le |Y| \le \mathrm{hiThresh} \right)$ or ($i = q$)}
	\If {($|Y| > \mathrm{hiThresh}$) or ($|Y| < \mathrm{loThresh}$)}  
		\State \Return $\bot$
	\Else
 		\State Let $y_1, \ldots y_{|Y|}$ be the elements of $Y$;
 		\State Choose $j$ at random from $[|Y|]$ and \Return $y_j$;
	\EndIf
\EndIf
\end{algorithmic}
\end{algorithm}

\begin{algorithm}[htb]
\caption{$\ComputeKappaAndPivot(t\varepsilon)$}
\label{alg:computeKappa}
\begin{algorithmic}
\State Find $\kappa \in [0, 1)$ such that $\varepsilon = (1+\kappa)(2.23+\frac{0.48}{(1-\kappa)^2}) - 1$ ;
\State $\mathrm{pivot} \gets \lceil 3e^{1/2}(1+\frac{1}{\kappa})^2 \rceil$;
\State \Return $(\kappa, \mathrm{pivot})$

\end{algorithmic}
\end{algorithm}

The pseudocode for {\UniGen} is shown in Algorithm~\ref{alg:unigen}.
{\UniGen} takes as inputs a Boolean CNF formula $F$, a tolerance
$\varepsilon$ ($> 1.71$, for teachnical reasons explained in the Appendix)
and a set $S$ of sampling variables.  It either returns a random
witness of $F$ or $\bot$ (indicating failure).  The algorithm assumes
access to a source of random binary numbers, and to two subroutines:
(i) $\BoundedSAT(F, N)$, which, for every $N > 0$, returns
$\min(|R_F|, N)$ distinct witnesses of $F$, and (ii) an approximate
model counter $\approxCounter(F, \varepsilon', 1-\delta')$.

{\UniGen} first computes two quantities, ``$\mathrm{pivot}$'' and
$\kappa$, that represent the expected size of a ``small'' cell and the
tolerance of this size, respectively.  The specific choices of
expressions used to compute $\kappa$ and ``$\mathrm{pivot}$'' in
{\ComputeKappaAndPivot} are motivated by technical reasons explained
in the Appendix.  The values of $\kappa$ and
``$\mathrm{pivot}$'' are used to determine high and low thresholds
(denoted ``$\mathrm{hiThresh}$'' and ``$\mathrm{loThresh}$''
respectively) for the size of each cell.  Lines 5--7 handle the easy
case when $F$ has no more than ``$\mathrm{hiThresh}$'' witnesses.
Otherwise, {\UniGen} invokes $\approxCounter$ to obtain an estimate,
$C$, of $|R_F|$ to within a tolerance of $0.8$ and with a confidence
of $0.8$.  Once again, the specific choices of the tolerance and
confidence parameters used in computing $C$ are motivated by technical
reasons explained in the Appendix.  The estimate $C$ is then
used to determine a range of candidate values for $m$.  Specifically,
this range is $\{q-4, \ldots q\}$, where $q$ is determined in line
10 of the pseudocode.  The loop in lines 12--17 checks whether some
value in this range is good enough for $m$, i.e., whether the number
of witnesses in a cell chosen randomly after partitioning $R_F$ using
$H_{xor}(|S|, m, 3)$, lies within ``$\mathrm{hiThresh}$'' and
``$\mathrm{loThresh}$''.  If so, lines 21--22 return a random witness
from the chosen cell.  Otherwise, the algorithm reports a failure in
line 19.

An probabilistic generator is likely to be invoked multiple times with
the same input constraint in constrained-random verification.  Towards
this end, note than lines 1--11 of the pseudocode need to executed
only once for every formula $F$.  Generating a new random witness
requires executing afresh only lines 12--22.  While this optimization
appears similar to ``leapfrogging''~\cite{SKV13,SKV13MC}, it is
fundamentally different since it does not sacrifice any theoretical
guarantees, unlike ``leapfrogging''.

\noindent {\bfseries \emph{Implementation issues:}} In our
implementation of {\UniGen}, {\BoundedSAT} is implemented using
CryptoMiniSAT~\cite{CryptoMiniSAT} -- a SAT solver that handles xor
clauses efficiently.  CryptoMiniSAT uses \emph{blocking clauses} to
prevent already generated witnesses from being generated again.  Since
the independent support of $F$ determines every satisfying assignment
of $F$, blocking clauses can be restricted to only variables in the
set $S$.  We implemented this optimization in CryptoMiniSAT, leading
to significant improvements in performance.  {\approxCounter} is
implemented using {\approxMC}~\cite{SKV13MC}.  Although the authors
of~\cite{SKV13MC} used ``leapfrogging'' in their experiments, we
disable this optimization since it nullifies the theoretical
guarantees of {~\cite{SKV13MC}.  We use ``random\_device" implemented
in C++ as the source of pseudo-random numbers in lines $7$, $14$, $15$
and $22$ of the pseudocode, and also as the source of random numbers
in {\approxMC}.

\noindent {\bfseries \emph{Guarantees:}} The following
theorem shows that {\UniGen} is an almost-uniform generator with 
a high success probability.
\begin{theorem} \label{theorem:unigen}
If $S$ is an independent support of $F$ and if $\varepsilon > 1.71$,
then for every $y \in R_F$, we have
 \[ \frac{1}{(1+\varepsilon)(|R_F|-1)} \le \prob\left[{\UniGen}(F, \varepsilon, S) = y\right] \le (1+\varepsilon)\frac{1}{|R_F|-1}.\]
In addition, $\prob\left[{\UniGen}(F, \varepsilon, S) \neq \bot\right]
\ge 0.62$.
\end{theorem}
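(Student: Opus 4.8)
The plan is to condition on the outcome of the approximate counter and on the value of the loop counter at which $\UniGen$ terminates, and then, for a fixed witness $y$, to estimate the probabilities of the relevant events using $2$-wise (resp.\ $3$-wise) independence of $H_{xor}$ together with Chebyshev's inequality. First, since $S$ is an independent support, the map sending a witness $y\in R_F$ to its restriction to the coordinates in $S$ is a bijection onto its image, so the sets $R_{F,h,\alpha}$ (defined through $h$ acting on the $S$-coordinates) genuinely partition $R_F$, and $|R_{F,h,\alpha}|$ is exactly the quantity tested against $\mathrm{loThresh}$ and $\mathrm{hiThresh}$. The branch in lines~5--7 is disposed of immediately: if $|R_F|\le\mathrm{hiThresh}$ then $\UniGen$ returns a uniformly random element of $R_F$ and never fails, and a one-line check shows $1/|R_F|$ lies in $\big[\frac{1}{(1+\varepsilon)(|R_F|-1)},\frac{1+\varepsilon}{|R_F|-1}\big]$ whenever $|R_F|\ge 2$. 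So assume from now on $|R_F|>\mathrm{hiThresh}$.

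The technical heart is a per-iteration estimate. Fix $y\in R_F$ and a value $i$ of the loop counter, let $(h,\alpha)$ be chosen as in lines~14--15, and put $Y_i=R_{F,h,\alpha}$. Then $\prob[h(y)=\alpha]=2^{-i}$ exactly, and conditioned on $h(y)=\alpha$ we have $y\in Y_i$, $\expect[|Y_i|]=\mu_i:=1+(|R_F|-1)2^{-i}$, and, using $2$-wise independence, $\var[|Y_i|]\le\mu_i$. Chebyshev's inequality then shows that, provided $\mu_i$ is suitably sandwiched between $\mathrm{loThresh}=\frac{\mathrm{pivot}}{1+\kappa}$ and $\mathrm{hiThresh}=1+(1+\kappa)\mathrm{pivot}$, the conditional probability $\prob[\mathrm{loThresh}\le|Y_i|\le\mathrm{hiThresh}\mid h(y)=\alpha]$ is at least an explicit constant close to $1$; the formulas $\mathrm{pivot}=\lceil 3e^{1/2}(1+\tfrac1\kappa)^2\rceil$ and $\varepsilon=(1+\kappa)(2.23+\tfrac{0.48}{(1-\kappa)^2})-1$ from $\ComputeKappaAndPivot$ are precisely what make this constant, and the two-sided estimate it yields on $\expect\big[\mathbf{1}[\,y\in Y_i,\ \mathrm{loThresh}\le|Y_i|\le\mathrm{hiThresh}\,]/|Y_i|\mid h(y)=\alpha\big]$, come out to within factors $(1+\kappa)$ and $(2.23+\cdots)$ of $\frac{1}{|R_F|-1}$ --- recall $2^{-i}/\mu_i\approx\frac{1}{|R_F|-1}$, which is the source of the $|R_F|-1$ in the statement.

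Next, let $\mathcal{G}$ be the event $\frac{|R_F|}{1.8}\le C\le 1.8|R_F|$; by the specification of $\approxCounter(F,0.8,0.8)$, $\prob[\mathcal{G}]\ge 0.8$. A short computation with $q=\lceil\log C+\log 1.8-\log\mathrm{pivot}\rceil$ shows that on $\mathcal{G}$ there is a sub-window of values $i$ among the candidates actually tried by the loop for which $\mu_i$ is sandwiched as required above, while for every candidate strictly below this window $\mu_i$ exceeds $\mathrm{hiThresh}$ by enough that (again by Chebyshev) the loop almost surely does not halt there. Using independence of the fresh hash functions across iterations we may then write, for $C$ fixed,
\[
  \prob[\UniGen(F,\varepsilon,S)=y]=\sum_{i}\pi_i\cdot\expect\big[\,\mathbf{1}[\,y\in Y_i,\ \mathrm{loThresh}\le|Y_i|\le\mathrm{hiThresh}\,]\,/\,|Y_i|\,\big],
\]
where $\pi_i$ is the probability that the loop runs through iteration $i$ without having terminated earlier. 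For the upper bound, drop $\pi_i$ ($\le 1$), apply the per-iteration upper estimate and $\prob[h(y)=\alpha]=2^{-i}$ to each of the $O(1)$ terms, and observe that this bound does not use $\mathcal{G}$ at all; the constants close up to exactly $\frac{1+\varepsilon}{|R_F|-1}$ once $\varepsilon>1.71$. For the lower bound, restrict to $\mathcal{G}$ and to termination inside the well-centered window: the ``does not halt too early'' estimate lower-bounds the relevant $\pi_i$, the per-iteration lower estimate handles the expectation, and $\prob[\mathcal{G}]\ge0.8$ absorbs the residual slack, giving $\frac{1}{(1+\varepsilon)(|R_F|-1)}$. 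Finally, $\UniGen$ returns $\bot$ only when the loop exits with $i=q$ and $|Y_q|\notin[\mathrm{loThresh},\mathrm{hiThresh}]$; bounding this by $\prob[\neg\mathcal{G}]+\prob[\mathcal{G}]\cdot\prob[\text{no good cell in the window}]$ and inserting the Chebyshev bounds yields $\prob[\bot]\le 0.38$, i.e.\ $\prob[\UniGen\ne\bot]\ge 0.62$.

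The main obstacle is the lower-bound bookkeeping: one must simultaneously guarantee that (a) a sufficiently wide window of well-centered values $i$ sits inside the few candidate values determined by the noisy estimate $C$, (b) the loop does not halt at a too-small $i$ where cells overflow $\mathrm{hiThresh}$, and (c) it does not run off to $i=q$ and fail, and then verify that the accumulated multiplicative losses --- the $(1\pm\kappa)$ gaps between $\mu_i$ and the thresholds, the factor $1.8$ from the counter, the Chebyshev error of order $1/\mathrm{pivot}$, and the sum over $O(1)$ iterations --- multiply out to at most $1+\varepsilon$ on one side while leaving probability at least $0.62$ on the other. This is exactly the calibration performed by $\ComputeKappaAndPivot$ and by the choices of the confidence/tolerance $0.8,0.8$ and the offset in $q$, and it is why the hypothesis $\varepsilon>1.71$ is needed --- the threshold at which these inequalities first become simultaneously satisfiable. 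I expect the detailed verification of these inequalities to be the bulk of the argument (relegated, as the text indicates, to the Appendix).
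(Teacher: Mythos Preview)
Your overall plan matches the paper's closely: the paper first reduces from $S$ to the full support $X$ via two lemmas (your bijection observation), proves $\prob[q-3\le m\le q]\ge 0.8$ from the counter's guarantee (its Lemma~3, with $m:=\log(|R_F|-1)-\log\mathrm{pivot}$), and then assembles the three conclusions from per-iteration concentration. One stylistic difference: the paper invokes a Chernoff--Hoeffding bound tailored to $r$-wise independence (its appendix Theorem~2) rather than raw Chebyshev. For the lower bound it conditions on $h(y)=\alpha$, retains $2$-wise independence among the remaining indicators, and obtains tail $\le e^{-1}$; for the success probability it keeps the full $3$-wise independence and obtains tail $\le e^{-3/2}$, which is exactly what makes $0.8\cdot(1-e^{-3/2})\approx 0.62$. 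Your Chebyshev route reaches comparable (in fact slightly sharper) numerics, so no harm there. You are also more explicit than the paper about the ``does not halt too early'' factor $\pi_i$ in the lower bound; the paper's Lemma~4 drops it silently.

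Where your sketch departs from the paper --- and where there is a genuine gap as written --- is the upper bound. The paper does \emph{not} avoid the event $\mathcal G$. Its Lemma~6 splits into three cases ($m<q-3$, $q-3\le m\le q$, $m>q$) and uses \emph{two} per-iteration bounds: the trivial $p_{i,y}\le 2^{-i}$ for $i\ge m$, and a Chebyshev ``overshoot'' bound $p_{i,y}\le\frac{1}{|R_F|-1}\cdot\big(1-(1+\kappa)2^{-(m-i)}\big)^{-2}$ for $i<m$ (its Lemma~5). It then recombines the three cases with the weights $\le 0.2,\ \le 1,\ \le 0.2$ supplied by $\mathcal G$, and that weighted sum is precisely where the constant $2.23+\tfrac{0.48}{(1-\kappa)^2}$ in $\ComputeKappaAndPivot$ comes from. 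Your sketch, by contrast, appears to rely only on $\prob[h(y)=\alpha]=2^{-i}$; that alone fails, because on the (up to probability $0.2$) event that the counter badly underestimates $|R_F|$ and hence $q\ll m$, the sum $\sum_{i=q-3}^{q}2^{-i}$ is not $O(1/(|R_F|-1))$. What actually controls those small-$i$ iterations is the overshoot bound --- the very ``loop almost surely does not halt there'' estimate you already invoked for the lower bound. If you feed that bound into the upper-bound sum as well and take a supremum over $q$, you do obtain a $\mathcal G$-free upper bound (even slightly below the paper's weighted figure), but that is a different and more careful computation than the one you wrote, and it does not match how the paper actually argues.
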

For lack of space, we defer the proof to the Appendix.  It can be
shown that {\UniGen} runs in time polynomial in $\varepsilon^{-1}$ and
in the size of $F$, relative to an {\NP}-oracle.

The guarantees provided by Theorem~\ref{theorem:unigen} are
significantly stronger than those provided by earlier generators that
scale to large problem instances.  Specifically, neither
{\XORSampleprime}~\cite{Gomes-Sampling} nor {\UniWit}~\cite{SKV13}
provide strong upper bounds for the probability of generation of a
witness. {\PAWS}~\cite{EGSS13} offers a \emph{probabilistic} guarantee that
the probability of generation of a witness lies within a tolerance factor of
the uniform probability, while the guarantee of Theorem~\ref{theorem:unigen}
is not prbabilistic.
The success probability of {\PAWS}, like that of {\XORSampleprime}, is 
bounded below by an expression that depends on difficult-to-estimate input 
parameters.  Interestingly, the same parameters also directly affect the 
tolerance of distribution of the generated witnesses.  The success 
probability of {\UniformWitness} is bounded below by $0.125$, which is
significantly smaller than the lower bound of $0.62$ guaranteed by
Theorem~\ref{theorem:unigen}.

\noindent {\bfseries \emph{Trading scalability with uniformity:}} The
tolerance parameter $\varepsilon$ provides a knob to balance
scalability and uniformity in {\UniGen}.  Smaller values of
$\varepsilon$ lead to stronger guarantees of uniformity (by
Theorem~\ref{theorem:unigen}).  Note, however, that the value of
``$\mathrm{hiThresh}$'' increases with decreasing values of
$\varepsilon$, requiring {\BoundedSAT} to find more witnesses.  Thus,
each invocation of {\BoundedSAT} is likely to take longer as
$\varepsilon$ is reduced.

\section{Experimental Results}\label{sec:experiments}
To evaluate the performance of {\UniGen}, we built a prototype
implementation and conducted an extensive set of experiments.
Industrial constrained-random verification problem instances are typically 
proprietary and unavailable for published research.
Therefore, we conducted experiments on CNF SAT constraints
arising from several problems available in the public-domain.  These
included bit-blasted versions of constraints arising in bounded model
checking of circuits and used in~\cite{SKV13}, bit-blasted versions of
SMTLib benchmarks, constraints arising from automated program
synthesis, and constraints arising from ISCAS89 circuits with parity
conditions on randomly chosen subsets of outputs and next-state variables.  

To facilitate running multiple experiments in parallel, we used a 
high-performance cluster and ran each experiment on a node of the cluster.
Each node had two quad-core Intel Xeon processors with 4 GB of main
memory.  Recalling the terminology used in the pseudocode of {\UniGen}
(see Section~\ref{sec:algorithm}), we set the tolerance $\varepsilon$
to $6$, and the sampling set $S$ to an independent support of $F$ in
all our experiments.  Independent supports (not necessarily minimal
ones) for all benchmarks were easily obtained from the providers of
the benchmarks on request.  We used $2,500$ seconds as the timeout for
each invocation of {\BoundedSAT} and $20$ hours as the overall timeout
for {\UniGen}, for each problem instance.  If an invocation of
{\BoundedSAT} timed out in line 16 of the pseudocode of {\UniGen}, we
repeated the execution of lines 14--16 without incrementing $i$.  With
this set-up, {\UniGen} was able to successfully generate random
witnesses for formulas having up to $486,193$ variables.

For performance comparisons, we also implemented and conducted
experiments with {\UniWit} -- a state-of-art near-uniform
generator~\cite{SKV13}.  Our choice of {\UniWit} as a reference for
comparison is motivated by several factors.  First, {\UniGen} and
{\UniWit} share some commonalities, and {\UniGen} can be viewed as an
improvement of {\UniWit}.  Second, {\XORSampleprime} is known to
perform poorly vis-a-vis {\UniWit}~\cite{SKV13}; hence, comparing with
{\XORSampleprime} is not meaningful.  Third, the implementation of
{\PAWS} made available by the authors of~\cite{EGSS13} currently does
not accept CNF formulae as inputs.  It accepts only a graphical model
of a discrete distribution as input, making a direct comparison with
{\UniGen} difficult.  Since {\PAWS} and {\UniWit} share the same
scalability problem related to large random xor-clauses,
we chose to focus only on {\UniWit}.  Since the ``leapfrogging'' heuristic 
used in~\cite{SKV13} nullifies the guarantees of {\UniWit}, we disabled 
this optimization.  For fairness of comparison, we used the same timeouts 
in {\UniWit} as used in {\UniGen}, i.e. $2,500$ seconds for every 
invocation of {\BoundedSAT}, and $20$ hours overall for every invocation of
{\UniWit}.

Table~\ref{table:exptresults} presents the results of our 
performance-comparison experiments.  Column $1$ lists the CNF benchmark, 
and columns $2$ and $3$ give the count of variables and size of
independent support used, respectively.  The results of
experiments with {\UniGen} are presented in the next $3$ columns.
Column $4$ gives the observed probability of success of {\UniGen} when
generating $1,000$ random witnesses. Column $5$ gives the average time
taken by {\UniGen} to generate one witness (averaged over a large
number of runs), while column $6$ gives the average number of
variables per xor-clause used for randomly partitioning $R_F$.  The
next two columns give results of our experiments with {\UniWit}.
Column $7$ lists the average time taken by {\UniWit} to generate a
random witness, and column $8$ gives the average number of variables
per xor-clause used to partition $R_F$.  A ``$-$'' in any column means
that the corresponding experiment failed to generate any witness in
$20$ hours.

It is clear from Table~\ref{table:exptresults} that the average
run-time for generating a random witness by {\UniWit} can be two to
three orders of magnitude larger than the corresponding run-time for
{\UniGen}.  This is attributable to two reasons.  The first stems from
fewer variables in xor-clauses and blocking clauses when small
independent supports are used.  Benchmark ``tutorial3'' exemplifies
this case.  Here, {\UniWit} failed to generate any witness because all
calls to {\BoundedSAT} in {\UniWit}, with xor-clauses and blocking
clauses containing numbers of variables, timed out.  In contrast, the
calls to {\BoundedSAT} in {\UniGen} took much less time, due to short
xor-clauses and blocking clauses using only variables from the
independent support.  The other reason for {\UniGen}'s improved
efficiency is that the computationally expensive step of identifying a
a good range of values for $m$ (see Section~\ref{sec:algorithm} for
details) needs to be executed only once per benchmark.  Subsequently,
whenever a random witness is needed, {\UniGen} simply iterates over
this narrow range of $m$.  In contrast, generating every witness in
{\UniWit} (without leapfrogging) requires sequentially searching over
all values afresh to find a good choice for $m$.  Referring to
Table~\ref{table:exptresults}, {\UniWit} requires more than $20,000$
seconds on average to find a good value for $m$ and generate a random
witness for benchmark ``s953a\_3\_2''.  Unlike in {\UniGen}, there is
no way to amortize this large time over multiple runs in {\UniWit},
while preserving the guarantee of near-uniformity.

Table~\ref{table:exptresults} also shows that the observed success
probability of {\UniGen} is almost always $1$, much higher than what
Theorem~\ref{theorem:unigen} guarantees and better than those from {\UniWit}.    It is clear from our experiments that {\UniGen} can scale
to problems involving almost $500$K variables, while preserving
guarantees of almost uniformity.  This goes much beyond the reach of
any other random-witness generator that gives strong guarantees on the
distribution of witnesses.
\begin{figure}[h!b]\centering
\includegraphics[scale=0.6]{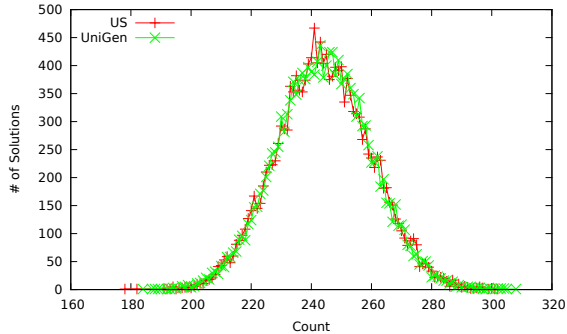}
\vspace*{-0.15in}
\caption{Uniformity comparison for case110}
\label{fig:case110}
\end{figure}

\begin{table*}[t]

\caption{Runtime performance comparison of {\UniGen} and {\UniWit}}
\label{table:exptresults}
\begin{center}
\small{
\begin{tabular}{|c|c|c|c|c|c|c|c|c|}
\hline
\multicolumn{3}{|c|}{}& \multicolumn{3}{|c|}{\UniGen} &
\multicolumn{3}{|c|}{\UniWit} \\
\hline 
Benchmark & |X| & |S| &  \shortstack{Succ\\Prob} & \shortstack{Avg\\Run Time (s)}
&\shortstack{Avg\\XOR leng}  & \shortstack{Avg\\Run Time (s)} & \shortstack{Avg\\XOR len} 
& \shortstack{Succ\\Prob} \\
\hline
Squaring7&1628&72&1.0&2.44&36&2937.5&813&0.87\\\hline
squaring8&1101&72&1.0&1.77&36&5212.19&550&1.0\\\hline
Squaring10&1099&72&1.0&1.83&36&4521.11&550&0.5\\\hline
s1196a\_7\_4&708&32&1.0&6.9&16&833.1&353&0.37\\\hline
s1238a\_7\_4&704&32&1.0&7.26&16&1570.27&352&0.35\\\hline
s953a\_3\_2&515&45&0.99&12.48&23&22414.86&257&*\\\hline
EnqueueSeqSK&16466&42&1.0&32.39&21&--&--&--\\\hline
LoginService2&11511&36&0.98&6.14&18&--&--&--\\\hline
LLReverse&63797&25&1.0&33.92&13&3460.58&31888&0.63\\\hline
Sort&12125&52&0.99&79.44&26&--&--&--\\\hline
Karatsuba&19594&41&1.0&85.64&21&--&--&--\\\hline
tutorial3&486193&31&0.98&782.85&16&--&--&--\\\hline
\end{tabular} 

A ``*'' entry indicates insufficient data for estimating success probability
}
\end{center}
\vspace*{-0.2in}

\end{table*}
\vspace*{-0.2in} Theorem~\ref{theorem:unigen} guarantees that the
probability of generation of every witness lies within a specified
tolerance of the uniform probability.  In practice, however, the
distribution of witnesses generated by {\UniGen} is much more closer
to a uniform distribution.  To illustrate this, we implemented a
\emph{uniform sampler}, henceforth called {\US}, and compared the
distributions of witnesses generated by {\UniGen} and by {\US} for
some representative benchmarks.  Given a CNF formula $F$, {\US} first
determines $|R_F|$ using an exact model counter (such as
{\sharpSATTool}).  To mimic generating a random witness, {\US} simply
generates a random number $i$ in $\{1 \ldots |R_F|\}$.  To ensure fair
comparison, we used the same source of randomness in both {\UniGen}
and {\US}.  For every problem instance on which the comparison was
done, we generated a large number $N$ ($= 4 \times 10^6$) of sample
witnesses using each of {\US} and {\UniGen}.  In each case, the number
of times various witnesses were generated was recorded, yielding a
distribution of the counts.  Figure~\ref{fig:case110} shows the
distributions of counts generated by {\UniGen} and by {\US} for one of
our benchmarks (case110) with $16,384$ witnesses.  The horizontal axis
represents counts and the vertical axis represents the number of
witnesses appearing a specified number of times.  Thus, the point
$(242, 450)$ represents the fact that each of $450$ distinct witnesses
were generated $242$ times in $4\times10^6$ runs.  Observe that the
distributions resulting from {\UniGen} and {\US} can hardly be
distinguished in practice.  This holds not only for this benchmark,
but for all other benchmarks we experimented with.

Overall, our experiments confirm that {\UniGen} is two to three orders
of magnitude more efficient than state-of-the-art random witness
generators, has probability of success almost $1$, and preserves
strong guarantees about the uniformity of generated witnesses.
Furthermore, the distribution of generated witnesses can hardly be
distinguished from that of a uniform sampler in practice.

\section{Conclusion}\label{sec:conclusion}

Striking a balance between scalability and uniformity is a difficult
challenge when designing random witness generators for
constrained-random verification.  {\UniGen} is the first such
generator for Boolean CNF formulae that scales to hundreds of
thousands of variables and still preserves strong guarantees of
uniformity.  In future, we wish to investigate the design of scalable
generators with similar guarantees for SMT constraints, leveraging recent progress in satisfiability modulo
  theories.

\section*{Acknowledgments}
We profusely thank Mate Soos for implementing required APIs in CryptoMiniSAT without which 
the experimentation section would have been incomplete. Mate was generous with his suggestions to improve our implementation. 
We thank Ajith John for his help in experimental setup.

 \bibliography{Report}
\bibliographystyle{abbrv}

\clearpage
\appendix

\setcounter{theorem}{0}
\setcounter{lemma}{0}

In this section, we present a proof of Theorem~\ref{theorem:unigen},
originally stated in Section~\ref{sec:algorithm}, and also present an
extended table of performance comparison results.

Recall that {\UniGen} is a probabilistic algorithm that takes as
inputs a Boolean CNF formula $F$, a tolerance $\varepsilon$ and a
subset $S$ of the support of $F$.  We first show that if $X$ is the
support of $F$, and if $S \subsetneq X$ is an independent support of
$F$, then {\UniGen}($F$, $\varepsilon$, $S$)
behaves \emph{identically} (in a probabilistic sense) to
{\UniGen}($F$, $\varepsilon$, $X$).  Once this is established, the
remainder of the proof proceeds by making the simplifying assumption
$S = X$.

Clearly, the above claim holds trivially if $X = S$.  Therefore, we
focus only on the case when $S \subsetneq X$.  For notational
convenience, we assume $X = \{x_1, \ldots x_n\}$, $0 \le k < n$, $S
= \{x_1, \ldots x_k\}$ and $D = \{x_{k+1}, \ldots x_n\}$ in all the
statements and proofs in this section.  We also use $\vec{X}$ to
denote the vector $(x_1, \ldots x_n)$, and similarly for $\vec{S}$ and
$\vec{D}$.
\begin{lemma}
\label{lemma:ind-decomp}
Let $F(\vec{X})$ be a Boolean function with support $X$, and let $S$
be an independent support of $F$.  Then there exist Boolean functions
$g_0, g_1, \ldots g_{n-k}$, each with support $S$ such that
\[ 
F(\vec{X}) \leftrightarrow \left(g_0(\vec{S}) \wedge \bigwedge_{j=1}^{n-k}(x_{k+j} \leftrightarrow g_j(\vec{S}))\right)
\]
\end{lemma}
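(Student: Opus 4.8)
The plan is to construct the functions $g_0, g_1, \ldots, g_{n-k}$ explicitly by projecting $F$ onto $S$, and then verify the biconditional by a case analysis on whether a given full assignment satisfies $F$.

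First I would set $g_0(\vec S) := \exists \vec D\, F(\vec S, \vec D)$, i.e.\ $g_0$ is the characteristic function of the projection of $R_F$ onto the variables of $S$; by construction its support is contained in $S$. Since $S$ is an independent support of $F$, for every assignment $\sigma$ to $S$ with $g_0(\sigma) = 1$ there is exactly one assignment $\tau_\sigma$ to $D$ with $F(\sigma, \tau_\sigma) = 1$ (this is precisely the defining property: no two witnesses of $F$ differ only on $D$). For $1 \le j \le n-k$, I would then define $g_j(\sigma)$ to be the value assigned to $x_{k+j}$ by $\tau_\sigma$ when $g_0(\sigma) = 1$, and (arbitrarily) $0$ when $g_0(\sigma) = 0$. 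Each $g_j$ is a well-defined total Boolean function with support contained in $S$, as required.

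Next I would check the equivalence at an arbitrary full assignment $(\sigma, \tau)$ to $X$. If $F(\sigma,\tau) = 1$, then $g_0(\sigma) = 1$ and, by uniqueness, $\tau = \tau_\sigma$, so $g_j(\sigma) = \tau(x_{k+j})$ for every $j$; hence the right-hand side is true. Conversely, if the right-hand side is true at $(\sigma,\tau)$, then $g_0(\sigma) = 1$, so $\tau_\sigma$ exists, and $\tau(x_{k+j}) = g_j(\sigma) = \tau_\sigma(x_{k+j})$ for all $j$, which forces $\tau = \tau_\sigma$ and therefore $F(\sigma,\tau) = F(\sigma, \tau_\sigma) = 1$. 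This yields
\[
F(\vec X) \leftrightarrow \Big(g_0(\vec S) \wedge \bigwedge_{j=1}^{n-k}\big(x_{k+j} \leftrightarrow g_j(\vec S)\big)\Big).
\]

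I do not expect a serious obstacle; the only subtlety is bookkeeping — ensuring the $g_j$ are total functions (hence the arbitrary value when $g_0$ is false) and invoking the independent-support property at exactly the point where uniqueness of $\tau_\sigma$ is needed. If a purely syntactic construction is preferred, the same $g_j$ can be written directly as formulas over $S$ from the models of $F$, but the semantic description above is the cleanest route.
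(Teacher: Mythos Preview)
Your proposal is correct and takes essentially the same approach as the paper: define $g_0$ as the existential projection $\exists \vec{D}\, F(\vec{S},\vec{D})$, obtain the $g_j$ from the uniqueness of the $D$-extension guaranteed by the independent-support property, and then verify both directions of the biconditional. The paper's argument is slightly terser (it simply asserts that the $g_j$ exist ``from the definition of a dependent support'' rather than spelling out their values when $g_0$ is false), but your explicit construction and semantic case analysis are exactly the intended content.
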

\begin{proof}
Since $S$ is an independent support of $F$, we have $D = X \setminus
S$ is a dependent support of $F$.  From the definition of a dependent
support, there exist Boolean functions $g_1, \ldots g_k$, each with
support $S$, such that $F(\vec{X}) \rightarrow \bigwedge_{j=1}^{n-k}
(x_{k+j} \leftrightarrow g_j(\vec{S}))$.

Let $g_0(\vec{S})$ be the characteristic function of the projection
of $R_F$ on $S$.  More formally, $g_0(\vec{S}) \equiv
\bigvee_{(x_{k+1}, \ldots x_n) \in \{0, 1\}^{n-k}} F(\vec{X})$.
It follows that $F(\vec{X}) \rightarrow g_0(\vec{S})$.  Combining this
with the result from the previous paragraph, we get the implication
$F(\vec{X}) \;\rightarrow\;$ $\left(g_0(\vec{S}) \wedge \bigwedge_{j=1}^{n-k}(x_{k+j}\leftrightarrow
g_j(\vec{S}))\right)$

From the definition of $g_0(\vec{S})$ given above, we have
$g_0(\vec{S}) \rightarrow F(\vec{S}, x_{k+1}, \ldots x_n)$, for some
values of $x_{k+1}, \ldots x_n$.  However, we also know that
$F(\vec{X}) \rightarrow \bigwedge_{j=1}^{n-k} (x_{k+j} \leftrightarrow
g_j(\vec{S}))$.  It follows that
$\left(g(\vec{S}) \wedge \bigwedge_{j=1}^{n-k}
(x_{k+j} \leftrightarrow g_j(\vec{S}))\right) \rightarrow F(\vec{X})$.
\end{proof}

Referring to the pseudocode of {\UniGen} in
Section~\ref{sec:algorithm}, we observe that the only steps that
depend directly on $S$ are those in line $14$, where $h$ is chosen
randomly from $H_{xor}(|S|, i, 3)$, and line $16$, where the set $Y$
is computed by calling {\BoundedSAT}($F \wedge (h(x_1, \ldots x_{|S|})
= \alpha), \mathrm{hiThresh}$).  Since all subsequent steps of the
algorithm depend only on $Y$, it suffices to show that if $S$ is an
independent support of $F$, the probability distribution of $Y$
obtained at line $16$ is \emph{identical} to what we would obtain if
$S$ was set equal to the entire support, $X$, of $F$.

The following lemma formalizes the above statement.  As before, we
assume $X = \{x_1, \ldots x_n\}$ and $S = \{x_1, \ldots x_k\}$.
\begin{lemma}
\label{lemma:XequivS}
Let $S$ be an independent support of $F(\vec{X})$.  Let $h$ and
$h'$ be hash functions chosen uniformly at random from $H_{xor}(k, i,
3)$ and $H_{xor}(n, i, 3)$, respectively.  Let $\alpha$ and
$\alpha'$ be tuples chosen uniformly at random from $\{0, 1\}^i$.
Then, for every $Y \in \{0, 1\}^n$ and for every $t > 0$, we have\\
$\prob\left[{\BoundedSAT}\left(F(\vec{X}) \wedge (h(\vec{S}) = \alpha), t\right) = Y\right]$ $= $\\
\hspace*{1in}
$\prob\left[{\BoundedSAT}\left(F(\vec{X}) \wedge (h'(\vec{X}) = \alpha'), t\right) = Y\right]$
\end{lemma}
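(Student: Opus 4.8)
The plan is to collapse the claim to a statement purely about the witness set $R_F$, using the functional dependence supplied by Lemma~\ref{lemma:ind-decomp}, and then compare the two hashing experiments over $R_F$ point by point. First I would apply Lemma~\ref{lemma:ind-decomp} to write $F(\vec X) \leftrightarrow g_0(\vec S) \wedge \bigwedge_{j=1}^{n-k}\bigl(x_{k+j} \leftrightarrow g_j(\vec S)\bigr)$, with each $g_j$ having support $S$. The consequence I want from this is that the projection $\pi\colon R_F \to \{0,1\}^k$ sending $(x_1,\dots,x_n)$ to $(x_1,\dots,x_k)$ is \emph{injective}, with image $\{\vec s : g_0(\vec s)=1\}$ and inverse $\vec s \mapsto (\vec s,\,g_1(\vec s),\dots,g_{n-k}(\vec s))$ on that image. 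Since $\BoundedSAT(G,t)$ returns only $\min(|R_G|,t)$ distinct members of the solution set $R_G$, its output distribution depends on the input formula $G$ solely through $R_G$ (and through $t$ and internal coin tosses, which are the same in the two experiments). Hence it suffices to show that the random subset $\{\vec x \in R_F : h(\pi(\vec x)) = \alpha\}$ of $R_F$ and the random subset $\{\vec x \in R_F : h'(\vec x) = \alpha'\}$ of $R_F$ have the same distribution.

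To compare these, I would expand the $H_{xor}$ construction componentwise: for $\vec x \in R_F$, substituting $x_{k+j} = g_j(\pi(\vec x))$ yields $h'(\vec x)[\ell] = a'_{\ell,0} \oplus \bigoplus_{m=1}^k a'_{\ell,m}\,x_m \oplus \bigoplus_{j=1}^{n-k} a'_{\ell,k+j}\,g_j(\pi(\vec x))$, so on $R_F$ the hash $h'$ is a perturbation of the affine hash in the $S$-variables by a data-dependent term coming from the dependent variables. I would then attempt to couple the randomness of $(h',\alpha')$ with that of $(h,\alpha)$ so that, for every $\vec x \in R_F$, the events $h'(\vec x)=\alpha'$ and $h(\pi(\vec x))=\alpha$ coincide; here the injectivity of $\pi$ is exactly what guarantees that any $r \le 3$ distinct witnesses of $F$ have $r$ distinct images in $\{0,1\}^k$, so that the defining $2^{-mr}$ collision probabilities of the two $H_{xor}$ families refer to the same configurations. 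Averaging back over any conditioning introduced, combined with the $\BoundedSAT$ reduction of the first paragraph, would then close the argument.

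The step I expect to be genuinely delicate is that $g_1,\dots,g_{n-k}$ are arbitrary Boolean functions of $\vec S$ and are in general not affine, so the extra term $\bigoplus_{j} a'_{\ell,k+j}\,g_j(\pi(\vec x))$ does not let $h'$ restricted to $R_F$ descend to a uniformly random affine hash on $\{0,1\}^k$, which blocks a naive cell-by-cell coupling. I would handle this by conditioning on the coefficients $a'_{\ell,k+1},\dots,a'_{\ell,n}$ attached to the dependent variables, noting that after this conditioning the remaining freedom in $a'_{\ell,0}$ together with the uniform $\alpha'$ still acts transitively on the $i$-bit labels, and then checking that the features of the random cell actually used in the downstream analysis of {\UniGen} --- the distribution of the cell size and the inclusion probabilities of subsets of size at most three --- are invariant between the two experiments, each being pinned down by $3$-wise independence of $H_{xor}$ and the injectivity of $\pi$. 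Establishing this invariance is the crux; once it is in place, the equality of the distributions of $Y$ over the relevant (accepted) range of cell sizes follows, and with it the lemma.
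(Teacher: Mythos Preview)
You have correctly isolated the real obstruction: because the functions $g_1,\ldots,g_{n-k}$ supplied by Lemma~\ref{lemma:ind-decomp} need not be affine, the term $B_\ell(\vec S)=\bigoplus_{j} a'_{\ell,k+j}\,g_j(\vec S)$ genuinely depends on $\vec S$, and the constraint $h'(\vec X)=\alpha'$ restricted to $R_F$ does \emph{not} reduce to a uniformly random affine constraint on $\vec S$. The paper's own proof takes exactly the naive route you warn against: it rearranges to $F(\vec X)\wedge\bigwedge_\ell\bigl((a_{\ell,0}\oplus\bigoplus_{j\le k} a_{\ell,j}x_j)\leftrightarrow(\alpha'[\ell]\oplus B_\ell)\bigr)$ and then asserts that ``$\alpha'[\ell]\oplus B_\ell$ is a random binary variable with equal probability of being $0$ and $1$,'' silently treating $B_\ell$ as a bit rather than as a Boolean function of the free variables $\vec S$.

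In fact the lemma as literally stated is false, so neither argument can be completed. Take $n=3$, $k=2$, $i=1$, and $F\equiv\bigl(x_3\leftrightarrow(x_1\wedge x_2)\bigr)$, so $\pi$ maps $R_F=\{000,010,100,111\}$ bijectively onto $\{0,1\}^2$. A random hash from $H_{xor}(2,1,3)$ together with a random $\alpha$ always cuts $\{0,1\}^2$ into pieces of sizes $\{0,4\}$ or $\{2,2\}$; the $S$-experiment therefore never produces a cell of size~$1$. With $h'\in H_{xor}(3,1,3)$ and $a'_3=1$, $(a'_1,a'_2)=(0,0)$, the restriction of $h'$ to $R_F$ is $a'_0\oplus(x_1\wedge x_2)$, which splits $R_F$ into pieces of sizes $3$ and $1$. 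For $t=2$ the $X$-experiment then returns the singleton $\{111\}$ with positive probability while the $S$-experiment never does, so $\prob[\BoundedSAT(\cdot)=Y]$ differs on the two sides. Your conditioning-and-coupling plan cannot close this, because no coupling equates distributions that are genuinely different; the final clause ``and with it the lemma'' does not follow.

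Your fallback \emph{is} the correct salvage, but it proves something strictly weaker than Lemma~\ref{lemma:XequivS}. What survives, via injectivity of $\pi$ and $3$-wise independence of both $H_{xor}$ families, is that for any $r\le 3$ distinct $y_1,\ldots,y_r\in R_F$ one has $\prob\bigl[\bigwedge_{t\le r} h(\pi(y_t))=\alpha\bigr]=2^{-ri}=\prob\bigl[\bigwedge_{t\le r} h'(y_t)=\alpha'\bigr]$. These low-order moments are exactly what Lemmas~\ref{lm:lowerBound}--\ref{lm:upperBound} and the success-probability theorem consume, so the downstream analysis goes through for the $S$-hash directly. State and prove that invariance, rather than the full distributional identity the lemma claims.
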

\begin{proof}
Since $h'$ is chosen uniformly at random from $H_{xor}(n, i, 3)$,
recalling the definition of $H_{xor}(n, i, 3)$, we have
$F(\vec{X}) \wedge (h'(\vec{X}) = \alpha')$ $\equiv$
$F(\vec{X}) \wedge \bigwedge_{l=1}^i \left((a_{l,0} \oplus \bigoplus_{j=1}^{n}
a_{l,j}\cdot x[j]) \leftrightarrow \alpha'[l]\right)$, where the
$a_{l,j}$s are chosen independently and identically randomly from
$\{0, 1\}$.

Since $S$ is an independent support of $F$, from
Lemma~\ref{lemma:ind-decomp}, there exist Boolean functions
$g_1, \ldots g_{n-k}$, each with support $S$, such that
$F(\vec{X}) \rightarrow \bigwedge_{j=1}^{n-k} (x_{k+j} \leftrightarrow
g_j(\vec{S}))$.  Therefore, $F(\vec{X}) \wedge (h'(\vec{X})
= \alpha')$ is semantically equivalent to
$F(\vec{X}) \wedge \bigwedge_{l=1}^i \left((a_{l,0} \oplus \bigoplus_{j=1}^{k}
a_{l,j}\cdot x[j] \oplus B) \leftrightarrow \alpha'[l]\right)$, where
$B \equiv \bigoplus_{j=k+1}^{n} a_{l,j}\cdot g_{j-k}(\vec{S})$.
Rearranging terms, we get $F(\vec{X}) \wedge
\bigwedge_{l=1}^i \left((a_{l,0} \oplus  \bigoplus_{j=1}^{k} a_{l,j}\cdot
 x[j]) \leftrightarrow (\alpha'[l] \oplus B)\right)$.

Since $\alpha'$ is chosen uniformly at random from $\{0,1\}^i$ and
since $B$ is independent of $\alpha'$, it is easy to see that
$\alpha'[l] \oplus B$ is a random binary variable with equal
probability of being $0$ and $1$.  It follows that
$\prob\left[{\BoundedSAT}(F(\vec{X}) \wedge (h'(\vec{X}) = \alpha'),
t) = Y\right]$ $=$
$\prob\left[{\BoundedSAT}(F(\vec{X}) \wedge (h(\vec{S}) = \alpha), t)
= Y\right]$.
\end{proof}

Lemma~\ref{lemma:XequivS} allows us to continue with the remainder of
the proof assuming $S = X$.  It has already been shown
in~\cite{Gomes-Sampling} that $H_{xor}(n, m, 3)$ is a $3$-independent
family of hash functions. We use this fact in a key way in the
remainder of our analysis.  The following result about
Chernoff-Hoeffding bounds, proved in~\cite{SKV13MC}, plays an
important role in our discussion.

\begin{theorem}\label{theorem:chernoff-hoeffding}
Let $\Gamma$ be the sum of $r$-wise independent random variables, each
of which is confined to the interval $[0, 1]$, and suppose
$\expect[\Gamma] = \mu$.  For $0 < \beta \le 1$, if $2 \le r \le \left\lfloor
\beta^{2}\mu e^{-1/2} \right\rfloor \leq 4$ , then $\prob\left[\,|\Gamma - \mu| \ge
  \beta\mu\,\right] \le e^{-r/2}$.
\end{theorem}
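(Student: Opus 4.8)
The plan is to use the classical moment method for sums of limited-independence variables: since $r$-wise independence determines every joint moment of $\Gamma$ of order at most $r$, we can bound a central moment $\expect[(\Gamma-\mu)^r]$ explicitly and then apply Markov's inequality. Write $\Gamma=\sum_{i=1}^n X_i$ with the $X_i\in[0,1]$ and $r$-wise independent, put $\mu_i=\expect[X_i]$ and $Y_i=X_i-\mu_i$, so the $Y_i$ are $r$-wise independent, have zero mean, and satisfy $|Y_i|\le 1$; also $\sum_i\var[X_i]\le\sum_i\expect[X_i^2]\le\sum_i\expect[X_i]=\mu$ because $X_i^2\le X_i$ on $[0,1]$. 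The hypothesis $2\le r\le\lfloor\beta^2\mu e^{-1/2}\rfloor\le 4$ forces $r\in\{2,3,4\}$, so I would dispatch the even cases $r\in\{2,4\}$ directly and reduce the case $r=3$ to the case $r=2$ (legitimate, since $2$-wise independence follows from $3$-wise independence and the hypothesis for $r=3$ implies the one for $r=2$).

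For even $r$, expand $\expect[(\Gamma-\mu)^r]=\sum_{i_1,\dots,i_r}\expect[Y_{i_1}\cdots Y_{i_r}]$ and note that, by $r$-wise independence and $\expect[Y_i]=0$, any term in which some index occurs exactly once vanishes; only fully ``paired'' index patterns survive. Using $|Y_i|\le 1$ (so $Y_i^k\le Y_i^2$ for even $k\ge 2$) and $\sum_i\expect[Y_i^2]\le\mu$, this gives $\expect[(\Gamma-\mu)^2]\le\mu$ and $\expect[(\Gamma-\mu)^4]\le\mu+3\mu^2$, where the coefficient $3$ is the number of pairings of four points. Markov's inequality applied to $(\Gamma-\mu)^r$ then gives $\prob[\,|\Gamma-\mu|\ge\beta\mu\,]\le\expect[(\Gamma-\mu)^r]/(\beta\mu)^r$, which is at most $1/(\beta^2\mu)$ when $r=2$ and at most $(1+3\mu)/(\beta^4\mu^3)$ when $r=4$.

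The final step is to feed in the numerical slack. From $r\le\lfloor\beta^2\mu e^{-1/2}\rfloor$ we get $\beta^2\mu\ge r\,e^{1/2}$, and since $\beta\le 1$ also $\mu\ge\beta^2\mu\ge r\,e^{1/2}$. Then: for $r=2$, $1/(\beta^2\mu)\le 1/(2e^{1/2})<e^{-1}$; for $r=3$, the $r=2$ bound gives $1/(\beta^2\mu)\le 1/(3e^{1/2})<e^{-3/2}$; and for $r=4$, bounding $1+3\mu\le 3.2\,\mu$ (valid since $\mu\ge 4e^{1/2}>5$) yields $(1+3\mu)/(\beta^4\mu^3)\le 3.2/(\beta^2\mu)^2\le 3.2/(4e^{1/2})^2<e^{-2}$. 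In each case the resulting constant falls strictly below $e^{-r/2}$, establishing the theorem. The main obstacle is the bookkeeping in the fourth-moment expansion (identifying the surviving index patterns and their multinomial coefficients) together with verifying that the three numerical margins really lie on the correct side of $e^{-r/2}$; both are routine but must be done with care, since this is a small-deviation regime with little room to spare. A uniform alternative would be to quote the general Schmidt--Siegel--Srinivasan moment bound for limited independence, but the explicit case analysis is self-contained and tight enough for the narrow range of $r$ here.
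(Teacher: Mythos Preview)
Your argument is correct. The moment-method reduction is exactly the right tool here: pairwise independence plus $X_i\in[0,1]$ gives $\var[\Gamma]\le\mu$, and the fourth-moment expansion you carry out (with the three pairings contributing the $3\mu^2$ term) is standard and accurately bounded. The numerical checks all pass with the margins you state: $1/(2e^{1/2})\approx 0.303<e^{-1}\approx 0.368$; $1/(3e^{1/2})\approx 0.202<e^{-3/2}\approx 0.223$; and $3.2/(4e^{1/2})^2=0.2/e\approx 0.074<e^{-2}\approx 0.135$. The reduction of $r=3$ to the Chebyshev case is legitimate since $3$-wise independence implies $2$-wise independence and the hypothesis $\lfloor\beta^2\mu e^{-1/2}\rfloor\ge 3$ supplies the stronger lower bound on $\beta^2\mu$ needed to hit $e^{-3/2}$.

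As for comparison: the paper does \emph{not} actually prove this theorem. It is quoted verbatim as a result ``proved in~[SKV13MC]'' (the companion {\ApproxMC} paper) and used as a black box in the analysis of Lemmas~\ref{lm:lowerBound} and the success-probability theorem. So there is no in-paper proof to compare against; your write-up supplies a self-contained argument where the paper relies on a citation. The approach you take---explicit second/fourth-moment bounds with the odd case $r=3$ handled by falling back to Chebyshev with the tighter constant---is essentially the Schmidt--Siegel--Srinivasan argument specialized to the narrow range $r\in\{2,3,4\}$, which is also what underlies the cited result.
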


Using notation introduced in Section~\ref{sec:prelims}, let $R_F$
denote the set of witnesses of the Boolean formula $F$.  For
convenience of analysis, we assume that $\log (|R_F|-1)
- \log \mathit{pivot}$ is an integer, where $\mathit{pivot}$ is the
quantity computed by algorithm {\ComputeKappaAndPivot} (see
Section~\ref{sec:algorithm}).  A more careful analysis removes this
assumption by scaling the probabilities by constant factors.  Let us
denote $\log (|R_F|-1) - \log \mathit{pivot}$ by $m$.  The expression
used for computing $\mathit{pivot}$ in algorithm
{\ComputeKappaAndPivot} ensures that $\mathrm{pivot} \ge 17$.
Therefore, if an invocation of {\UniGen} does not return from line $7$
of the pseudocode, then $|R_F| \ge 18$.  Note also that the expression
for computing $\kappa$ in algorithm {\ComputeKappaAndPivot} requires
$\varepsilon \ge 1.71$ in order to ensure that $\kappa \in [0, 1)$ can
always be found.

The following lemma shows that $q$, computed in line $10$ of the
pseudocode, is a good estimator of $m$.
\begin{lemma}
\label{lm:qBounds}
$\prob [ q-3 \le m \le q] \ge 0.8$
\end{lemma}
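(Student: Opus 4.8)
Recall from line 10 of the pseudocode that $q = \lceil \log C + \log 1.8 - \log \mathrm{pivot}\rceil$, where $C = \approxCounter(F, 0.8, 0.8)$, and recall that we have set $m = \log(|R_F|-1) - \log \mathrm{pivot}$. The key probabilistic input is the guarantee on $\approxCounter$: with probability at least $0.8$, we have $\frac{|R_F|}{1.8} \le C \le 1.8\,|R_F|$. The plan is to condition on this event (of probability $\ge 0.8$) and then show that, \emph{deterministically} under this event, $q-3 \le m \le q$. Since the appearance of $|R_F|-1$ rather than $|R_F|$ in the definition of $m$ is a mild nuisance, I would first note that since {\UniGen} reaches line 10 only when $|Y| > \mathrm{hiThresh}$, we have $|R_F| \ge 18$, so $|R_F|-1 \ge \frac{17}{18}|R_F|$ and hence $\log(|R_F|-1) \ge \log |R_F| - \log\frac{18}{17}$, which is a small constant that can be absorbed into the slack.

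\textbf{The two inequalities.} For the upper bound $m \le q$: from $C \ge \frac{|R_F|}{1.8}$ we get $\log C + \log 1.8 \ge \log |R_F| \ge \log(|R_F|-1)$, hence $\log C + \log 1.8 - \log \mathrm{pivot} \ge m$, and since $q$ is the ceiling of the left-hand side, $q \ge m$. For the lower bound $m \ge q-3$: from $C \le 1.8\,|R_F|$ we get $\log C + \log 1.8 \le \log |R_F| + 2\log 1.8 = \log|R_F| + \log 3.24$. Then $q \le \log C + \log 1.8 - \log \mathrm{pivot} + 1 \le \log |R_F| - \log \mathrm{pivot} + \log 3.24 + 1$. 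Comparing with $m = \log(|R_F|-1) - \log\mathrm{pivot} \ge \log|R_F| - \log\mathrm{pivot} - \log\tfrac{18}{17}$, we obtain $q - m \le \log 3.24 + 1 + \log\tfrac{18}{17} < 3$, so $q \le m + 3$, i.e.\ $m \ge q-3$, as required. (The numerical constants are loose enough that the $|R_F|-1$ correction and the ceiling both fit comfortably inside the budget of $3$.)

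\textbf{Assembling.} Putting the two deterministic inequalities together: conditioned on the $\approxCounter$ success event, $q-3 \le m \le q$ holds with certainty. Therefore $\prob[q-3 \le m \le q] \ge \prob[\text{$\approxCounter$ succeeds}] \ge 0.8$, which is the claim.

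\textbf{Main obstacle.} There is no deep obstacle here; the only thing to be careful about is book-keeping with the constants — in particular making sure the ceiling operation (contributing at most $1$), the $\log 1.8$ factors from the two-sided approximate count, and the $|R_F|-1$ versus $|R_F|$ discrepancy all together stay within the window of width $3$. Since $\log 3.24 \approx 1.70$, adding $1$ for the ceiling gives about $2.70$, leaving roughly $0.3$ of slack to absorb the $\log\frac{18}{17} \approx 0.082$ correction, so the margin is real but not generous; this is precisely why {\ComputeKappaAndPivot} is designed to force $\mathrm{pivot} \ge 17$ (equivalently $|R_F| \ge 18$ at line 10), and why the tolerance/confidence pair $(0.8, 0.8)$ is chosen as it is.
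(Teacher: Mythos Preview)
Your proposal is correct and follows essentially the same route as the paper's proof: condition on the $\approxCounter$ success event (probability $\ge 0.8$), then verify deterministically that $q-3 \le m \le q$ by bounding the effect of the two $\log 1.8$ factors, the ceiling, and the $\log\frac{18}{17}$ correction arising from $|R_F|\ge 18$. The paper does exactly this (writing the correction as $\log\frac{1}{1-1/|R_F|}\le 0.12$), so there is nothing to add.
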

\begin{proof}
Recall that in line $9$ of the pseudocode, an approximate model
counter is invoked to obtain an estimate, $C$, of $|R_F|$ with
tolerance $0.8$ and confidence $0.8$.  By the definition of
approximate model counting, we have $\prob[\frac{C}{1.8} \le |R_F| \le
(1.8)C] \ge 0.8$. Thus,
$\prob[\log C -
\log (1.8) \le \log |R_F| \le \log C + \log (1.8)] \ge 0.8$. It follows that
$\prob[\log C - \log (1.8) - \log pivot - \log
(\frac{1}{1-1/|R_F|}) \le
\log (|R_F|-1) -\log pivot \le \log C - \log pivot + \log (1.8)-\log 
(\frac{1}{1-1/|R_F|})] \ge 0.8$. Substituting $q = \lceil \log C +\log 1.8 - 
\log pivot \rceil$, $m = \log (|R_F|-1) - \log \mathit{pivot}$, $log (1.8) 
= 0.85$ and $\log (\frac{1}{1-1/|R_F|}) \le 0.12$ (since $|R_F| \ge 18$ on
reaching line $10$ of the pseudocode), we get $\prob [ q-3 \le 
m \le q] \ge 0.8$.
\end{proof}

The next lemma provides a lower bound on the probability
of generation of a witness.  Let $w_{i,y,\alpha}$ denote the probability
$\prob\left[\frac{\mathrm{pivot}} {1+\kappa}\right.$ $\left. \le
|R_{F,h,\alpha}| \leq 1+(1+\kappa)\mathrm{pivot} \mbox{ and } h(y)
= \alpha\right.$ $:$ $\left. h \xleftarrow{R} H_{xor}(n, i,
3)\right]$.  The proof of the lemma also provides a lower
bound on $w_{m,y,\alpha}$.

\begin{lemma}\label{lm:lowerBound}
For every witness $y$ of $F$, $\prob[\textrm{y is output}] \ge \frac{0.8 (1-e^{-1})}{(1.06+\kappa)(|R_F|-1)} $
\end{lemma}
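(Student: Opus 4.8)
The plan is to lower-bound the probability that a fixed witness $y \in R_F$ is the output of \UniGen. A witness $y$ is output in one of two ways: (i) the easy branch (lines 5--7), when $|R_F| \le \mathrm{hiThresh}$, in which case each witness is returned with probability exactly $1/|R_F|$; or (ii) the main branch, where for some value $i$ reached by the loop, the hash function $h$ and the target $\alpha$ jointly satisfy $h(y) = \alpha$ and $\mathrm{loThresh} \le |R_{F,h,\alpha}| \le \mathrm{hiThresh}$, and then $y$ is selected uniformly from the cell with probability $1/|R_{F,h,\alpha}| \ge 1/\mathrm{hiThresh}$. Since the easy branch only helps, I would focus the analysis on the main branch, and in fact on the single ``good'' iteration $i = m$ (recall $m = \log(|R_F|-1) - \log \mathit{pivot}$), conditioned on the loop actually reaching $i = m$.

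First I would write $\prob[y \text{ is output}]$ as a sum over iterations and drop all but the $i=m$ term, giving roughly $\prob[y \text{ is output}] \ge \prob[\text{loop reaches } i=m] \cdot \frac{1}{\mathrm{hiThresh}} \cdot \prob[h(m)(y)=\alpha \text{ and } |R_{F,h,\alpha}| \in [\mathrm{loThresh},\mathrm{hiThresh}]]$, with appropriate care that these events are over the correct probability spaces (the $\alpha$ and earlier loop iterations being handled by the near-uniform-range arguments). The quantity $w_{m,y,\alpha}$ named in the lemma statement is exactly the joint probability in the last factor, so the core of the proof is a lower bound on $w_{m,y,\alpha}$. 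To bound it, I would decompose: $\prob[h(y)=\alpha] = 2^{-m}$ by $2$-wise (indeed $3$-wise) independence, and then lower-bound the conditional probability that $|R_{F,h,\alpha}|$ lands in the good range \emph{given} $h(y)=\alpha$. Writing $R_{F,h,\alpha} = \{y\} \cup (R_{F,h,\alpha} \setminus \{y\})$ and letting $\Gamma = |R_{F,h,\alpha} \setminus \{y\}|$, the variable $\Gamma$ is a sum of $3$-wise independent indicator variables over the $|R_F|-1$ other witnesses, each with mean $2^{-m}$, so $\mu = \expect[\Gamma] = (|R_F|-1)2^{-m} = \mathit{pivot}$ by the definition of $m$. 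Now I would invoke Theorem~\ref{theorem:chernoff-hoeffding} with $\beta = \kappa/(1+\kappa)$ (or a similar choice) to get $\prob[|\Gamma - \mathit{pivot}| \ge \beta \mathit{pivot}] \le e^{-r/2}$; the expression for $\mathit{pivot}$ in \ComputeKappaAndPivot, namely $\lceil 3e^{1/2}(1+1/\kappa)^2\rceil$, is precisely engineered so that $\beta^2 \mu e^{-1/2} \ge 3$, allowing $r=3$ and hence a failure probability at most $e^{-3/2}$, and more than that one checks the deviation bound $|\Gamma - \mathit{pivot}| < \beta\mathit{pivot}$ forces $|R_{F,h,\alpha}| = 1 + \Gamma$ into $[\mathrm{loThresh}, \mathrm{hiThresh}]$. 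Thus the conditional probability is at least $1 - e^{-1}$ (using $r=2$ suffices, or the sharper $1-e^{-3/2}$), giving $w_{m,y,\alpha} \ge 2^{-m}(1 - e^{-1})$.

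Finally I would assemble the pieces: from Lemma~\ref{lm:qBounds} the loop reaches $i=m$ with probability at least $0.8$ (since $q-3 \le m \le q$ means $m$ is in the scanned range $\{q-4,\ldots,q\}$ and, being the good value, the loop does not terminate before reaching it — here one must note the loop variable starts at $q-4$ and is incremented, and that earlier iterations either fail the termination test or, if they succeed, that only helps the output probability, so conditioning cleanly on reaching $i=m$ is legitimate), $\mathrm{hiThresh} = 1+(1+\kappa)\mathit{pivot}$, and $2^{-m} = \mathit{pivot}/(|R_F|-1)$. Multiplying, $\prob[y\text{ is output}] \ge 0.8 \cdot (1-e^{-1}) \cdot \frac{\mathit{pivot}}{|R_F|-1} \cdot \frac{1}{1+(1+\kappa)\mathit{pivot}}$, and since $\mathit{pivot} \ge 17$ one has $1 + (1+\kappa)\mathit{pivot} \le (1.06 + \kappa)\mathit{pivot}$ (as $1/\mathit{pivot} \le 1/17 < 0.06$), yielding the claimed bound $\frac{0.8(1-e^{-1})}{(1.06+\kappa)(|R_F|-1)}$. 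The main obstacle I anticipate is not any single estimate but the bookkeeping of probability spaces: making precise that dropping all loop iterations except $i=m$ is sound, that the event ``loop reaches $i=m$'' is independent enough of the fresh random choices of $h$ and $\alpha$ at iteration $m$ to multiply the probabilities, and that early-termination iterations contribute only nonnegatively — this requires carefully setting up the probability space over the whole sequence of random choices rather than reasoning iteration-by-iteration.
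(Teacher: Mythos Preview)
Your proposal is correct and follows essentially the same route as the paper's proof: reduce to the single iteration $i=m$, lower-bound $w_{m,y,\alpha}$ by applying Theorem~\ref{theorem:chernoff-hoeffding} to $\Gamma=\sum_{z\neq y}\gamma_{z,\alpha}$ with mean $\mu=(|R_F|-1)/2^m=\mathrm{pivot}$, and then assemble using Lemma~\ref{lm:qBounds} together with $1/\mathrm{pivot}\le 0.06$ to turn $1+(1+\kappa)\mathrm{pivot}$ into $(1.06+\kappa)\mathrm{pivot}$. The only slip is your aside that $r=3$ (and hence the ``sharper $1-e^{-3/2}$'') is available: once you condition on $h(y)=\alpha$, the indicators $\gamma_{z,\alpha}$ for $z\neq y$ retain only $2$-wise independence from the $3$-wise family, so $r=2$ and the bound $1-e^{-1}$ are exactly what the paper uses and exactly what the constant in the lemma requires.
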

\begin{proof}
If $|R_F| \le 1+(1+\kappa)\mathrm{pivot}$, the lemma holds trivially
(see lines $5$--$7$ of the pseudocode).  Suppose $|R_F| \ge
1+(1+\kappa)\mathrm{pivot}$ and let $U$ denote the event that witness
$y \in R_F$ is output by {\UniGen} on inputs $F$, $\varepsilon$ and
$X$.  Let $p_{i,y}$ denote the probability that we return from line
$17$ for a particular value of $i$ with $y$ in $R_{F,h,\alpha}$, where
$\alpha \in \{0,1\}^{i}$ is the value chosen in line $15$.  Then,
$\prob[U] = \sum_{i=q-3}^{q} \frac{1}{|Y|}p_{i,y} \prod_{j =
q-3}^{i-1} (1-p_{j,y})$, where $Y$ is the set of witnesses returned by
{\BoundedSAT} in line $16$ of the pseudocode. Let $f_m = \prob [
q-3 \le m \le q]$.  From Lemma~\ref{lm:qBounds}, we know that $f_m \ge
0.8$. From the design of the algorithm, we also know that
$\frac{1}{1+\kappa}\mathrm{pivot} \le |Y| \le
1+(1+\kappa)\mathrm{pivot}$. Therefore,
$\prob[U] \ge \frac{1}{1+(1+\kappa)\mathrm{pivot}}\cdot p_{m,y} \cdot
f_m$. The proof is now completed by showing
$p_{m,y} \ge \frac{1}{2^m}(1-e^{-1})$. This gives
$\prob[U] \ge \frac{0.8(1-e^{-1})}{(1+(1+\kappa)pivot)2^{m}} \ge \frac{0.8(1-e^{-1})}{(1.06+\kappa)(|R_F|-1)}$. The
last inequality uses the observation that $ 1/\mathrm{pivot} \leq
0.06$.

  To calculate $p_{m,y}$, we first note that since $y \in R_F$, the
  requirement ``$y\in R_{F,h,\alpha}$" reduces to ``$ y \in
  h^{-1}(\alpha)$". For $\alpha \in \{0,1\}^{n}$, we define
  $w_{m,y,\alpha}$ as $\prob\left[\frac{\mathrm{pivot}}
  {1+\kappa}\right.$ $\left. \le |R_{F,h,\alpha}| \leq
  1+(1+\kappa) \right.$ $\left. pivot \mbox{ and } h(y) = \alpha :
  h \xleftarrow{R} H_{xor}(n, m, 3)\right]$. Therefore, $p_{m,y} $ $=
  \Sigma_{\alpha \in \{0,1\}^m} \left(w_{m,y,\alpha}.2^{-m}\right)$. The proof is now completed by showing that
  $w_{m,y,\alpha} \ge (1-e^{-1})/2^{m}$ for every
  $\alpha \in \{0,1\}^{m}$ and $y \in \{0,1\}^{n}$.

Towards this end, let us 
first fix a random $y$. Now we define an indicator variable $\gamma_{z,
\alpha}$ for every $z \in R_F\setminus\{y\}$ such that $\gamma_{z,\alpha} = 1 $ 
if $h(z) = \alpha$, and $\gamma_{z,\alpha} = 0$ otherwise. Let us fix
$\alpha$ and choose $h$ uniformly at random from $H_{xor}(n,m,3)$. The
random choice of h induces a probability distribution on
$\gamma_{z,\alpha} $ such that $E[\gamma_{z,\alpha}]
= \prob[\gamma_{z,\alpha} = 1] = 2^{-m}$.  Since we have fixed $y$,
and since hash functions chosen from $H_{xor} (n,m,3)$ are $3$-wise
independent, it follows that for every distinct $z_a, z_b \in
R_F \setminus \{y\}$, the random variables $\gamma_{z_a,\alpha},
\gamma_{z_b,\alpha}$ are 2-wise independent.
    Let $\Gamma_{\alpha} = \sum_{z \in R_F \setminus\{y\}} \gamma_{z,\alpha}$ and $\mu_{\alpha} = E[\Gamma_{\alpha}]$. Clearly, $\Gamma_{\alpha} = |R_{F,h,\alpha}| -1$ and $\mu_{\alpha} = \sum_{z \in R_F \setminus \{y\}}$ $ E[\gamma_{z,\alpha}]$ $=\frac{|R_F|-1}{2^m}$. 
 Also, $\prob[\frac{\mathrm{pivot}}{1+\kappa} \le |R_{F,h,\alpha}| \le 1+(1+\kappa) \mathrm{pivot}]$  $= \prob[\frac{\mathrm{pivot}}{1+\kappa} -1  \le |R_{F,h,\alpha}|-1 \le (1+\kappa) \mathrm{pivot}]$ $\ge \prob[\frac{\mathrm{pivot}}{1+\kappa}  \le |R_{F,h,\alpha}|-1 \le (1+\kappa) \mathrm{pivot}]$. Using the expression for $\mathrm{pivot}$, we get $2 \le \lfloor e^{-1/2} (1+1/\epsilon)^{2} \cdot \frac{|R_F|-1}{2^m}\rfloor$.  Therefore using Theorem \ref{theorem:chernoff-hoeffding} and substituting $\mathrm{pivot} = (|R_F|-1)/2^m$, we get $\prob[\frac{\mathrm{pivot}}{1+\kappa}  \le |R_{F,h,\alpha}|-1 \le (1+\kappa) \mathrm{pivot}] \ge 1-e^{-1}$. Therefore, $\prob[\frac{\mathrm{pivot}}{1+\kappa} \le |R_{F,h,\alpha}| \le 1+(1+\kappa) \mathrm{pivot}] \ge 1-e^{-1}$ Since $h$ is chosen at random from $H_{xor}(n,m,3)$, we also have $\prob[h(y) = \alpha] = 1/2^m$. It follows that $w_{m,y,\alpha} \ge (1-e^{-1})/2^m$.      
\end{proof}

The next lemma provides an upper bound of $w_{i, y, \alpha}$ and $p_{i, y}$.
\begin{lemma}\label{lm:iterProof} 
For $i < m$, both $w_{i,y,\alpha}$ and $p_{i,y}$ are bounded above by
  $\frac{1}{|R_F|-1}\frac{1}
  {\left(1-\frac{1+\kappa}{2^{m-i}}\right)^2}$.
\end{lemma}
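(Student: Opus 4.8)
The goal is to bound $w_{i,y,\alpha}$ and $p_{i,y}$ from above for $i < m$. Since $p_{i,y} = \sum_{\alpha \in \{0,1\}^i} w_{i,y,\alpha}\cdot 2^{-i}$, it suffices to prove the bound for $w_{i,y,\alpha}$; the bound for $p_{i,y}$ then follows because it is an average of the $w_{i,y,\alpha}$'s over $\alpha \in \{0,1\}^i$. Fixing $y \in R_F$ and $\alpha \in \{0,1\}^i$, I would reuse the indicator-variable setup from the proof of Lemma~\ref{lm:lowerBound}: for each $z \in R_F\setminus\{y\}$ set $\gamma_{z,\alpha} = 1$ iff $h(z) = \alpha$, let $\Gamma_\alpha = \sum_{z} \gamma_{z,\alpha} = |R_{F,h,\alpha}| - 1$ (using $h(y)=\alpha$), and note $\mu_\alpha = \expect[\Gamma_\alpha] = (|R_F|-1)/2^i$, while $3$-wise independence of $H_{xor}(n,i,3)$ gives pairwise independence of the $\gamma_{z,\alpha}$'s once $y$ is fixed, hence $\var[\Gamma_\alpha] \le \mu_\alpha$.

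**The key estimate.** Writing $w_{i,y,\alpha} = \prob[\,h(y)=\alpha\,]\cdot \prob[\,\tfrac{\mathrm{pivot}}{1+\kappa} \le |R_{F,h,\alpha}| \le 1+(1+\kappa)\mathrm{pivot} \mid h(y)=\alpha\,]$, and since $\prob[h(y)=\alpha] = 2^{-i}$, I need to bound the conditional probability that $|R_{F,h,\alpha}|$ is at least $\tfrac{\mathrm{pivot}}{1+\kappa}$, i.e.\ that $\Gamma_\alpha \ge \tfrac{\mathrm{pivot}}{1+\kappa} - 1$. The whole point of the case $i < m$ is that $\mu_\alpha = \mathrm{pivot}\cdot 2^{m-i} \gg \mathrm{pivot}$, so the event of interest is a \emph{lower}-tail deviation far below the mean, and Chebyshev's inequality is the natural tool. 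Concretely, $\prob[\Gamma_\alpha \le \tfrac{\mathrm{pivot}}{1+\kappa}] \le \prob[\,|\Gamma_\alpha - \mu_\alpha| \ge \mu_\alpha - \tfrac{\mathrm{pivot}}{1+\kappa}\,] \le \var[\Gamma_\alpha]/(\mu_\alpha - \tfrac{\mathrm{pivot}}{1+\kappa})^2 \le \mu_\alpha/(\mu_\alpha - \tfrac{\mathrm{pivot}}{1+\kappa})^2$. Substituting $\mu_\alpha = \mathrm{pivot}\cdot 2^{m-i}$ and simplifying, the factor $\mathrm{pivot}$ cancels favorably: one gets $\mu_\alpha/(\mu_\alpha - \tfrac{\mathrm{pivot}}{1+\kappa})^2 = \tfrac{1}{\mathrm{pivot}\cdot 2^{m-i}}\cdot \tfrac{1}{(1 - \tfrac{1}{(1+\kappa)2^{m-i}})^2}$. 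Since $\kappa \ge 0$ we can replace $(1+\kappa)$ in the denominator by $1$ to weaken it to $\tfrac{1}{\mathrm{pivot}\cdot 2^{m-i}}\cdot \tfrac{1}{(1 - 2^{-(m-i)})^2}$; then multiplying by $\prob[h(y)=\alpha] = 2^{-i}$ and using $\mathrm{pivot}\cdot 2^m = |R_F|-1$ gives exactly $w_{i,y,\alpha} \le \tfrac{1}{|R_F|-1}\cdot\tfrac{1}{(1-2^{-(m-i)})^2}$. I should double-check the exact placement of the $1+\kappa$ factor to match the statement $\tfrac{1}{(1-\frac{1+\kappa}{2^{m-i}})^2}$ — it appears the intended bound keeps $(1+\kappa)$ in the \emph{numerator} of the subtracted term, which corresponds to bounding the deviation using the \emph{upper} threshold $(1+\kappa)\mathrm{pivot}$ rather than the lower one, or a slightly different algebraic grouping; I would align the computation with whichever threshold gives the cleaner denominator $1 - \tfrac{1+\kappa}{2^{m-i}}$.

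**Main obstacle.** The delicate point is ensuring the deviation $\mu_\alpha - (\text{threshold})$ is genuinely positive and bounded below by a clean multiple of $\mu_\alpha$ for \emph{all} $i < m$, including $i = m-1$ where $2^{m-i} = 2$ is smallest and the threshold is closest to the mean — this is exactly why the denominator $(1 - \tfrac{1+\kappa}{2^{m-i}})^2$ can get small (and why the bound is only interesting, not vacuous, when $2^{m-i}$ is not too close to $1+\kappa$; for $i=m-1$ one needs $\kappa < 1$, which holds by the constraint $\varepsilon > 1.71$). I also need to be careful that the Chebyshev step uses only $2$-wise independence (available from the $3$-wise independent family) and that $\var[\Gamma_\alpha] \le \mu_\alpha$ follows from each $\gamma_{z,\alpha}$ being a $\{0,1\}$ variable with the $\gamma$'s pairwise independent. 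Everything else is routine algebra; the one real choice is matching the $1+\kappa$ bookkeeping to the exact form stated, and I would present the computation so that the final line reads off the claimed expression directly.
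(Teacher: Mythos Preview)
Your overall strategy matches the paper's exactly: reuse the indicator variables $\gamma_{z,\alpha}$ for $z\in R_F\setminus\{y\}$ from Lemma~\ref{lm:lowerBound}, use pairwise independence (inherited from $3$-wise independence of $H_{xor}$ once $y$ is fixed) to get $\var[\Gamma_\alpha]\le \mu_\alpha$, apply Chebyshev to the lower tail, multiply by $\prob[h(y)=\alpha]=2^{-i}$, and read off $p_{i,y}$ as the average of the $w_{i,y,\alpha}$.

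There is, however, a genuine slip in your ``key estimate'' paragraph that you should not dismiss as mere bookkeeping. The event inside $w_{i,y,\alpha}$ is $\{\mathrm{loThresh}\le |R_{F,h,\alpha}|\le \mathrm{hiThresh}\}\cap\{h(y)=\alpha\}$, and to upper-bound its probability you must enclose it in a \emph{larger} event. You first say you want the probability that $|R_{F,h,\alpha}|\ge \mathrm{loThresh}$, but that event has probability close to~$1$ since $\mu_\alpha=\mathrm{pivot}\cdot 2^{m-i}\gg \mathrm{loThresh}$, so it is useless. You then actually compute $\prob[\Gamma_\alpha\le \tfrac{\mathrm{pivot}}{1+\kappa}]$ by Chebyshev, but that event does \emph{not} contain the target event at all (it sits essentially to the left of the interval $[\mathrm{loThresh},\mathrm{hiThresh}]$), so the resulting inequality does not bound $w_{i,y,\alpha}$. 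The only containment that works is
\[
\{\mathrm{loThresh}\le |R_{F,h,\alpha}|\le \mathrm{hiThresh}\}\ \subseteq\ \{|R_{F,h,\alpha}|-1\le (1+\kappa)\,\mathrm{pivot}\},
\]
which is the ``upper threshold'' route you mention at the end and precisely what the paper uses. This is not a cosmetic choice to get $(1+\kappa)$ into the numerator---it is the only valid direction. Once you take it, Chebyshev with deviation $\mu_\alpha-(1+\kappa)\mathrm{pivot}=\mu_\alpha\bigl(1-\tfrac{1+\kappa}{2^{m-i}}\bigr)$ and $\var[\Gamma_\alpha]\le\mu_\alpha$ gives $\prob[\Gamma_\alpha\le(1+\kappa)\mathrm{pivot}]\le \tfrac{2^i}{|R_F|-1}\cdot\bigl(1-\tfrac{1+\kappa}{2^{m-i}}\bigr)^{-2}$, and multiplying by $2^{-i}$ yields the stated bound directly.
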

\begin{proof}
We will use the terminology introduced in the proof of Lemma \ref{lm:lowerBound}.
Clearly, $\mu_{\alpha} = \frac{|R_F|-1}{2^i}$. Since each
$\gamma_{z,\alpha}$ is a $0$-$1$ variable,
$\var\left[\gamma_{z,\alpha}\right] \le
\expect\left[\gamma_{z,\alpha}\right]$.  Therefore, $\sigma^2_{z,
  \alpha}$ $\le \sum_{z \neq y, z \in
  R_F}\expect\left[\gamma_{z,\alpha}\right]$ $ \le \sum_{z \in
  R_F}\expect\left[\gamma_{z, \alpha}\right]$
$=\expect\left[\Gamma_\alpha\right] = 2^{-m}(|R_F|-1)$. So  $\prob[\frac{pivot}{1+\kappa} \le |R_{F,h,\alpha}| \le 1+(1+\kappa)\mathrm{pivot}] \leq \prob[|R_{F,h,\alpha}|-1 \leq (1+\kappa)\mathrm{pivot}]$. From Chebyshev's
inequality, we know that $\prob\left[|\Gamma_\alpha - \mu_{z,\alpha}|
  \ge\right.$ $\left.\kappa\sigma_{z,\alpha} \right] \leq 1/\kappa^2$ for every 
  $\kappa >0$. By choosing $\kappa = (1-\frac{1+\kappa}{2^{m-
  i}})\frac{\mu_{z,\alpha}}{\sigma_{z,\alpha}}$, we have $\prob[|R_{F,h,
  \alpha}|-1 \leq (1+\kappa)\mathrm{pivot}]$ $\leq \prob $ $ \left[| (|R_{F,h,
  \alpha}|-1) - \frac{|R_F|-1}{2^i} |\right.$ $\left.\geq (1-\frac{1+\kappa}
  {2^{m-i}})\frac{|R_F|-1}{2^i} \right]$ $\leq \frac{1}{\left(1-\frac{(1+\kappa)}{2^{m-
  i}}\right)^2} \cdot \frac{2^i}{|R_F|-1}$.  Since $h$ is chosen at random from 
  $H_{xor}(n,m,3)$, we also have $\prob[h(y) = \alpha] = 1/2^i$. It follows 
  that $w_{i,y,\alpha} \leq \frac{1}{|R_F|-1}\frac{1}
  {\left(1-\frac{1+\kappa}{2^{m-i}}\right)^2}$.  The bound for $p_{i,y}$ is easily
obtained by noting that $p_{i,y} = \Sigma_{\alpha \in \{0,1\}^i} \left(w_{i,y,\alpha}.2^{-i}\right)$.
\end{proof}
\begin{lemma}
\label{lm:upperBound}
For every witness $y$ of $F$, $\prob[\textrm{y is output}] \le \frac{1+\kappa}{|R_F|-1} (2.23+ \frac{0.48}{(1-\kappa)^2})$
\end{lemma}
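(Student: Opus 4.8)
The plan is to bound the probability that a fixed $y\in R_F$ is output by decomposing over the random value $q$ computed in line~10 and over the loop index $i$. If $|R_F|\le 1+(1+\kappa)\mathrm{pivot}$, then {\UniGen} returns from line~7 and outputs $y$ with probability exactly $1/|R_F|$, which is below the claimed bound; so assume $|R_F|>1+(1+\kappa)\mathrm{pivot}$ and set $m=\log(|R_F|-1)-\log\mathrm{pivot}$ as in the preceding analysis. I would write $\prob[\,y\textrm{ output}\,]=\sum_q\prob[q]\sum_{i=q-3}^{q}\prob[\,{\UniGen}\textrm{ returns }y\textrm{ at iteration }i\mid q\,]$. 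Exactly as in the proof of Lemma~\ref{lm:lowerBound}, the conditional term equals $\left(\prod_{j=q-3}^{i-1}(1-P_j)\right)p_{i,y}\cdot\expect[\,1/|Y|\mid\textrm{valid cell at }i,\ y\in Y\,]$, where $P_j$ is the probability that the cell at iteration $j$ is valid. Discarding the product (each factor is $\le 1$) and using that a valid cell has $|Y|\ge\mathrm{loThresh}=\mathrm{pivot}/(1+\kappa)$ gives the uniform bound $\prob[\,y\textrm{ output}\,]\le\frac{1+\kappa}{\mathrm{pivot}}\sum_q\prob[q]\sum_{i=q-3}^{q}p_{i,y}$, with $p_{i,y}$ depending only on $i$ because each iteration uses fresh randomness.

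Next I would split each window $\{q-3,\ldots,q\}$ at $i=m$. For $i\ge m$, the only bound used is $w_{i,y,\alpha}\le\prob[h(y)=\alpha]=2^{-i}$, whence $p_{i,y}=\sum_\alpha w_{i,y,\alpha}2^{-i}\le 2^{-i}$ and $\sum_{i\ge m}p_{i,y}\le 2^{1-m}=2\,\mathrm{pivot}/(|R_F|-1)$ for every $q$. For $i<m$, I would invoke Lemma~\ref{lm:iterProof}, giving $p_{i,y}\le\frac{1}{|R_F|-1}\cdot\frac{1}{(1-(1+\kappa)/2^{m-i})^2}$; this quantity is decreasing in $m-i$, and at most four window indices lie below $m$, so the worst window is $\{m-4,\ldots,m-1\}$ and the sum of $p_{i,y}$ over the window indices with $i<m$ is at most $S(\kappa)/(|R_F|-1)$ for every $q$, where $S(\kappa):=\sum_{k=1}^{4}(1-(1+\kappa)/2^{k})^{-2}$. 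Since $\sum_q\prob[q]=1$, summing over $q$ simply removes the weights, leaving $\prob[\,y\textrm{ output}\,]\le\frac{1+\kappa}{\mathrm{pivot}}\cdot\frac{S(\kappa)+2\,\mathrm{pivot}}{|R_F|-1}=\frac{1+\kappa}{|R_F|-1}\left(2+\frac{S(\kappa)}{\mathrm{pivot}}\right)$. Notably, unlike the lower bound, this argument never uses Lemma~\ref{lm:qBounds}.

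It then remains to check $\frac{S(\kappa)}{\mathrm{pivot}}\le 0.23+\frac{0.48}{(1-\kappa)^2}$. For the $k=1$ term I would use $\mathrm{pivot}\ge 3e^{1/2}(1+1/\kappa)^2$, so that $\frac{4}{(1-\kappa)^2\mathrm{pivot}}\le\frac{1}{(1-\kappa)^2}\cdot\frac{4}{3e^{1/2}}\left(\frac{\kappa}{1+\kappa}\right)^2\le\frac{1}{3e^{1/2}(1-\kappa)^2}$, which is strictly below $\frac{0.48}{(1-\kappa)^2}$ and leaves slack at least $0.48-\frac{1}{3e^{1/2}}$ because $(1-\kappa)^2\le 1$. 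For $k=2,3,4$ I would use $\mathrm{pivot}\ge 17$ together with $\kappa\in[0,1)$ to bound $(3-\kappa)^2$, $(7-\kappa)^2$, $(15-\kappa)^2$ from below; the three resulting constants sum to a bounded quantity that, together with the slack from the first term, stays within the $0.23$ budget. This yields $\prob[\,y\textrm{ output}\,]\le\frac{1+\kappa}{|R_F|-1}\left(2.23+\frac{0.48}{(1-\kappa)^2}\right)$.

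The step I expect to be the main obstacle is handling the random counter output $q$ without producing a divergent bound: a naive interchange of the sums over $q$ and $i$ would replace each four-term window by $\sum_{i<m}p_{i,y}$, which diverges because the Lemma~\ref{lm:iterProof} bound tends to $1/(|R_F|-1)$ as $i\to-\infty$. Keeping the sum in the form $\sum_q\prob[q]\,(\textrm{four-term window})$ and bounding every window uniformly by the single worst case $S(\kappa)$ is what makes the constant finite, and making the final numerical constants land exactly on $2.23$ and $0.48$ requires exploiting the precise definition of $\mathrm{pivot}$ rather than only $\mathrm{pivot}\ge 17$.
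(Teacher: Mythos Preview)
Your proposal is correct and takes a genuinely different route from the paper's. The paper conditions on the position of $m$ relative to the window $\{q-3,\ldots,q\}$, splitting into three cases ($m<q-3$, $q-3\le m\le q$, $m>q$) and, within Case~1, into four further sub-cases according to the exact value of $m$; it then \emph{does} invoke Lemma~\ref{lm:qBounds}, using $\prob[m<q-3]\le 0.2$ and $\prob[m>q]\le 0.2$ to weight the three case bounds, and the constant $2.23+0.48/(1-\kappa)^2$ arises precisely as $1\cdot(\text{Case 1 bound})+0.2\cdot(\text{Case 2 bound})+0.2\cdot(\text{Case 3 bound})$. Your argument instead bounds $\sum_{i=q-3}^{q}p_{i,y}$ by the single expression $(S(\kappa)+2\,\mathrm{pivot})/(|R_F|-1)$ \emph{uniformly in $q$}, so the average over $q$ is trivially the same and Lemma~\ref{lm:qBounds} is never used. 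This is cleaner and actually proves a bit more than the paper: the upper bound is insensitive to the confidence parameter of {\approxCounter}, so the choice $0.8$ in line~9 matters only for the lower bound and the success probability. The cost is that your raw bound $2+S(\kappa)/\mathrm{pivot}$ has a different shape from the target; your final paragraph correctly closes the gap by handling the $k=1$ term of $S(\kappa)$ via $\mathrm{pivot}\ge 3e^{1/2}(1+1/\kappa)^2$ (yielding at most $(3e^{1/2})^{-1}/(1-\kappa)^2$) and the $k=2,3,4$ terms via $\mathrm{pivot}\ge 17$ and $\kappa<1$, after which the comparison $0.187\le 0.278/(1-\kappa)^2$ (valid since $(1-\kappa)^2\le 1$) gives the stated inequality.
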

\begin{proof}
We will use the terminology introduced in the proof of Lemma \ref{lm:lowerBound}. $\prob[U] = \sum_{i=q-3}^{q} \frac{1}{|Y|}p_{i,y}$ $ \prod_{j = q-3}^{i} (1-p_{j,y}) \le  \frac{1+\kappa}{\mathrm{pivot}} \sum_{i=q-3}^{q} p_{i,y}$. We can sub-divide the calculation of $\prob[U]$ into three cases based on the range of the values m can take.\\
\textbf{Case 1 :} $q-3 \le m \le q$.\\
Now there are four values that $m$ can take. 
\begin{enumerate}
\item $m = q-3$. We know that $p_{i,y} \le \prob[h(y) = \alpha] = \frac{1}{2^i}$.  $\prob[U | m = q-3] \le \frac{1+\kappa}{\mathrm{pivot}} \cdot \frac{1}{2^{q-3}} \frac{15}{8}$. Substituting the value of $\mathrm{pivot}$ and $m$, we get $\prob[U | m = q-3] \le \frac{15(1+\kappa)}{8(|R_F|-1)}$.

\item $m = q-2$. For $i \in [q-2,q]$ $p_{i,y} \le \prob[h(y) = \alpha] = \frac{1}{2^i}$ Using Lemma \ref{lm:iterProof}, we get $p_{q-3,y} \le \frac{1}{|R_F|-1}\frac{1}
  {\left(1-\frac{1+\kappa}{2}\right)^2}$. Therefore, $\prob[U|m = q-2] \leq      \frac{1+\kappa}{\mathrm{pivot}} \frac{1}{|R_F|-1} (\frac{1}{1-\frac{1+\kappa}{2}}) + \frac{1+\kappa}{\mathrm{pivot}}\frac{1}{2^{q-2}} \frac{7}{4}$. Noting that $\mathrm{pivot} = \frac{|R_F|-1}{2^m} > 10$, $\prob[U|m = q-2] \leq \frac{1+\kappa}{|R_F|-1}(\frac{7}{4} + \frac{0.4}{(1-\kappa)^2})$
\item $m = q-1$. For $i \in [q-1,q]$, $p_{i,y} \le \prob[h(y) = \alpha] = \frac{1}{2^i}$. Using Lemma \ref{lm:iterProof}, we get $p_{q-3,y} + p_{q-2,y} \le \frac{1}{|R_F|-1}\left(\frac{1}{\left(1-\frac{1+\kappa}{2^2}\right)} +   \frac{1}
  {\left(1-\frac{1+\kappa}{2}\right)^2}\right)$. Therefore, $\prob[U|m = q-1] \leq \frac{1+\kappa}{\mathrm{pivot}} \left(\frac{1}{|R_F|-1}\left(\frac{1}{\left(1-\frac{1+\kappa}{2^2}\right)^2} +   \frac{1}
  {\left(1-\frac{1+\kappa}{2}\right)^2}\right) + \frac{1}{2^{q-1}} \frac{3}{2}\right)$. Noting that $\mathrm{pivot} = \frac{|R_F|-1}{2^m} > 10$ and $\kappa \leq 1$, $\prob[U|m = q-1] \leq \frac{1+\kappa}{|R_F|-1} (1.9 + \frac{0.4}{(1-\kappa)^2})$. 
\item $m = q$, $p_{q,y} \leq \prob[h(y) = \alpha] = \frac{1}{2^q}$. Using Lemma \ref{lm:iterProof}, we get $p_{q-3,y} + p_{q-2,y} + p_{q-1,y} \leq \frac{1}{|R_F|-1}\left(\frac{1}{\left(1-\frac{1+\kappa}{2^3}\right)^2}\frac{1}{\left(1-\frac{1+\kappa}{2^2}\right)^2} \right.$ $\left. +   \frac{1}
  {\left(1-\frac{1+\kappa}{2}\right)^2}\right)$. Therefore, $\prob[U|m = q] \leq \frac{1+\kappa}{\mathrm{pivot}} \left(\frac{1}{|R_F|-1} \right.$ $\left.\left(
\frac{1}{\left(1-\frac{1+\kappa}{2^3}\right)^2}  +
   \frac{1}{\left(1-\frac{1+\kappa}{2^2}\right)^2} +   \frac{1}
  {\left(1-\frac{1+\kappa}{2}\right)^2}\right) + 1\right)$. Noting that $\mathrm{pivot} = \frac{|R_F|-1}{2^m} > 10$, $\prob[U|m = q] \leq \frac{1+\kappa}{|R_F|-1} (1.58+\frac{0.4}{(1-\kappa)^2})$. 
\end{enumerate}
$\prob[U | q-3 \leq m \leq q] \leq \max_i (\prob[U|m =i])$. Therefore, $\prob[U| q-3 \leq m \leq q] \leq \prob[U|m=q-1] \leq \frac{1+\kappa}{|R_F|-1}(1.9+\frac{0.4}{(1-\kappa)^2})$. 
\\
\textbf{Case 2 :} $m < q-3$. $\prob[U |  m < q-3] \le \frac{1+\kappa}{\mathrm{pivot}} \cdot \frac{1}{2^{q-3}} \frac{15}{8}$. Substituting the value of $\mathrm{pivot}$ and maximizing $m-q+3$, we get $\prob[U |  m < q-3] \le \frac{15(1+\kappa)}{16(|R_F|-1)}$.\\
\textbf{Case 3 :} $m > q$. Using Lemma \ref{lm:iterProof}, we know that $\prob[U | m >q] \leq \frac{1+\kappa}{|R_F|-1} \frac{2^m}{|R_F|-1}  $ $\sum_{i=q-3}^{q} \frac{1}{1-\frac{1+\kappa}{2^{m-i}}}$. The R.H.S. is maximized when $m = q+1$. Hence $\prob[U | m >q] \leq \frac{1+\kappa}{|R_F|-1}$ $ \frac{2^m}{|R_F|-1}  \sum_{i=q-3}^{q} \frac{1}{1-\frac{1+\kappa}{2^{q+1-i}}}$. Noting that $\mathrm{pivot} = \frac{|R_F|-1}{2^m} > 10$ and expanding 
  the above summation $\prob[U | m > q] \leq \frac{1+\kappa}{|R_F|-1} \frac{1}{10}$ $\left( \frac{1}{(1-\frac{1+\kappa}{2^{4}})^2}+\frac{1}
  {(1-\frac{1+\kappa}{2^{3}})^2}+ \right.$ $\left. \frac{1}{(1-\frac{1+\kappa}
  {2^{2}})^2} + \frac{1}{(1-\frac{1+\kappa}
  {2^{1}})^2}\right).$ Using $\kappa \leq 1$ for the first two summation terms, $\prob[U | m > q] \leq \frac{1+\kappa}{|R_F|-1} \cdot \frac{1}{10} \cdot (7.1 + \frac{4}{(1-\kappa)^2})$

  Summing up all the above cases, $\prob[U] = \prob[U|m < q-3] \times \prob[m < q-3] + \prob[U|q-3 \le m \le  q] \times \prob[q-3 \le m \le  q] +\prob[U|m > q] \times \prob[m > q]$. Using $\prob [m < q-1] \le 0.2$, $\prob[m > q] \le 0.2$ and $\prob[q-3 \le m \le q] \le 1$. Therefore, $\prob[U] \leq \frac{1+\kappa}{|R_F|-1} (2.23 + \frac{0.48}{(1-\kappa)^2})$
 
\end{proof}
Combining Lemma ~\ref{lm:lowerBound} and ~\ref{lm:upperBound}, the following theorem is obtained.
\begin{theorem} 
For every witness $y$ of $F$, if $\varepsilon > 1.71$,\\ 
 \[ \frac{1}{(1+\varepsilon)(|R_F|-1)} \le \prob\left[{\UniGen}(F, \varepsilon, X) = y\right] \le (1+\varepsilon)\frac{1}{|R_F|-1}.\]
\end{theorem}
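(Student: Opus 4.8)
The plan is to derive the theorem directly by combining the two bounds already established, Lemma~\ref{lm:lowerBound} and Lemma~\ref{lm:upperBound}, together with the defining relation for $\kappa$ fixed in {\ComputeKappaAndPivot}: $\varepsilon = (1+\kappa)\left(2.23 + \frac{0.48}{(1-\kappa)^2}\right) - 1$, which rearranges to $1+\varepsilon = (1+\kappa)\left(2.23 + \frac{0.48}{(1-\kappa)^2}\right)$. Recall that this $\kappa$ lies in $[0,1)$ precisely because $\varepsilon > 1.71$, so the expression is well defined. Both lemmas already dispose of the easy case where {\UniGen} returns from lines $5$--$7$, so all that remains is to assemble the two inequalities.

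The \emph{upper bound} is then immediate: Lemma~\ref{lm:upperBound} gives $\prob[\textrm{y is output}] \le \frac{1+\kappa}{|R_F|-1}\left(2.23 + \frac{0.48}{(1-\kappa)^2}\right)$, and by the choice of $\kappa$ the right-hand side equals $\frac{1+\varepsilon}{|R_F|-1}$. For the \emph{lower bound}, Lemma~\ref{lm:lowerBound} gives $\prob[\textrm{y is output}] \ge \frac{0.8(1-e^{-1})}{(1.06+\kappa)(|R_F|-1)}$, so I only need $0.8(1-e^{-1})(1+\varepsilon) \ge 1.06 + \kappa$ for every $\kappa \in [0,1)$. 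Substituting $1+\varepsilon = (1+\kappa)\left(2.23 + \frac{0.48}{(1-\kappa)^2}\right)$ and discarding the nonnegative term $\frac{0.48}{(1-\kappa)^2}$, it suffices to check $0.8(1-e^{-1})\cdot 2.23\,(1+\kappa) \ge 1.06 + \kappa$; since $0.8(1-e^{-1}) > 0.505$ and $0.505 \cdot 2.23 > 1.12$, the left-hand side is at least $1.12(1+\kappa) = 1.12 + 1.12\kappa \ge 1.06 + \kappa$, the last step holding because $0.06 + 0.12\kappa \ge 0$. This yields $\prob[\textrm{y is output}] \ge \frac{1}{(1+\varepsilon)(|R_F|-1)}$, completing both directions.

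I do not anticipate any genuine obstacle here: the whole difficulty has been pushed into the earlier lemmas (the Chernoff--Hoeffding tail bound behind Lemma~\ref{lm:lowerBound}, the Chebyshev estimate of Lemma~\ref{lm:iterProof}, and the four-case summation of Lemma~\ref{lm:upperBound}), and what is left is purely arithmetic. The one point to watch is the mismatch between the $1.06+\kappa$ in the denominator of Lemma~\ref{lm:lowerBound} and the $1+\kappa$ one might naively expect, but the constant $0.8(1-e^{-1})\cdot 2.23 > 1.06$ leaves enough slack to absorb it without any case split on $\kappa$. Finally, since the reduction via Lemma~\ref{lemma:XequivS} already shows {\UniGen}$(F,\varepsilon,S)$ behaves identically to {\UniGen}$(F,\varepsilon,X)$ whenever $S$ is an independent support, this theorem is exactly the probability statement of Theorem~\ref{theorem:unigen}.
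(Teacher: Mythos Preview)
Your proposal is correct and follows essentially the same approach as the paper: combine Lemmas~\ref{lm:lowerBound} and~\ref{lm:upperBound}, substitute $(1+\varepsilon) = (1+\kappa)\bigl(2.23+\frac{0.48}{(1-\kappa)^2}\bigr)$, and verify the inequality $\frac{1.06+\kappa}{0.8(1-e^{-1})} \le (1+\kappa)\bigl(2.23+\frac{0.48}{(1-\kappa)^2}\bigr)$. The only difference is that you spell out the numerical verification of this inequality, whereas the paper simply asserts it.
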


\begin{proof}
The proof is completed by using Lemmas \ref{lm:lowerBound} and \ref{lm:upperBound} and substituting $(1+\varepsilon) = (1+\kappa)(2.23+\frac{0.48}{(1-\kappa)^2})$. To arrive at the results, we use the inequality  $\frac{1.06+\kappa}{0.8(1-e^{-1})} \le (1+\kappa)(2.23+\frac{0.48}{(1-\kappa)^2})$.   
\end{proof}

\begin{theorem}
Algorithm {\UniGen} succeeds (i.e. does not return $\bot$) with probability at least $0.62$.
\end{theorem}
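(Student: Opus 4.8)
The plan is to upper-bound the probability that {\UniGen} returns $\bot$ by about $0.38$. The easy case is immediate: if $|R_F|\le\mathrm{hiThresh}$ the algorithm returns in line~$7$ and never fails, so assume $|R_F|>\mathrm{hiThresh}$, whence (as already noted, since $\mathrm{pivot}\ge 17$) $|R_F|\ge 18$ and $m=\log(|R_F|-1)-\log\mathrm{pivot}$ is well defined. Inspecting lines~$12$--$19$, {\UniGen} returns $\bot$ precisely when the loop runs all the way to $i=q$ and the final set $Y$ satisfies $|Y|\notin[\mathrm{loThresh},\mathrm{hiThresh}]$; since the loop exits as soon as a good cell is found and each iteration draws a fresh $h$ and $\alpha$, this is exactly the event that \emph{every} iteration $i\in\{q-3,\dots,q\}$ produces a cell $R_{F,h,\alpha}$ of size outside $[\mathrm{loThresh},\mathrm{hiThresh}]$. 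Hence $\prob[\bot]$ is at most $\prob[m\notin\{q-3,\dots,q\}]$ plus $\prob[m\in\{q-3,\dots,q\}]\cdot p^{\ast}$, where $p^{\ast}$ is the probability that a single cell built at the ``correct'' level $i=m$ has size outside $[\mathrm{loThresh},\mathrm{hiThresh}]$: indeed, whenever $m$ lies in the loop's range, iteration $i=m$ is actually executed (unless some earlier iteration already succeeded, which is not a failure), and a failure forces that iteration's cell out of range.

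The first term is exactly what Lemma~\ref{lm:qBounds} controls: $\prob[m\notin\{q-3,\dots,q\}]\le 0.2$. For $p^{\ast}$ I would reuse the Chernoff--Hoeffding computation from the proof of Lemma~\ref{lm:lowerBound}, but without fixing a witness. Writing $\Gamma=|R_{F,h,\alpha}|=\sum_{z\in R_F}\gamma_z$ with $\gamma_z=1$ if $h(z)=\alpha$ and $\gamma_z=0$ otherwise, these indicators are $3$-wise independent because $h\xleftarrow{R}H_{xor}(n,m,3)$, and $\mu=\expect[\Gamma]=|R_F|/2^m$, which the definition of $m$ forces to equal $\mathrm{pivot}$ up to the factor $|R_F|/(|R_F|-1)\le 18/17$. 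Taking $\beta=\kappa/(1+\kappa)$, one checks that $(\mu(1-\beta),\mu(1+\beta))\subseteq[\mathrm{pivot}/(1+\kappa),\,1+(1+\kappa)\mathrm{pivot}]=[\mathrm{loThresh},\mathrm{hiThresh}]$, and that the defining choice $\mathrm{pivot}=\lceil 3e^{1/2}(1+1/\kappa)^2\rceil$ is calibrated precisely so that $\lfloor\beta^2\mu e^{-1/2}\rfloor\in\{3,4\}$ --- exactly the hypothesis needed to invoke Theorem~\ref{theorem:chernoff-hoeffding} with $r=3$. This yields $p^{\ast}\le\prob[|\Gamma-\mu|\ge\beta\mu]\le e^{-3/2}$. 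Since the randomness of $\approxCounter$ (which fixes $q$) is independent of the loop's hashing randomness, this bound is unaffected by conditioning on $m\in\{q-3,\dots,q\}$.

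Combining the pieces, $\prob[\bot]\le 0.2+(1-0.2)\,e^{-3/2}=0.8\,e^{-3/2}+0.2<0.38$, and therefore $\prob[{\UniGen}\neq\bot]>0.62$.

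I expect the main obstacle to be the bookkeeping in the middle step: verifying that at level $i=m$ the acceptance interval $[\mathrm{loThresh},\mathrm{hiThresh}]$ really contains $(\mu(1-\beta),\mu(1+\beta))$ once one accounts for the gap between $\mu=|R_F|/2^m$ and $\mathrm{pivot}=(|R_F|-1)/2^m$, for the ceiling in the definition of $\mathrm{pivot}$ (which must not push $\lfloor\beta^2\mu e^{-1/2}\rfloor$ above $4$), and for the ``$+1$'' shift in $\mathrm{hiThresh}$; and, more subtly, making the independence argument precise so that the per-level bound $e^{-3/2}$ can legitimately be pulled out of the probability conditioned on the value of $q$.
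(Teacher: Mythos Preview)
Your proposal is correct and follows essentially the same route as the paper: both argue that with probability at least $0.8$ (Lemma~\ref{lm:qBounds}) the ``good'' level $i=m$ lies in the loop's range $\{q-3,\dots,q\}$, and that at that level the cell size falls in $[\mathrm{loThresh},\mathrm{hiThresh}]$ with probability at least $1-e^{-3/2}$ via Theorem~\ref{theorem:chernoff-hoeffding} applied to the $3$-wise independent indicators with $\beta=\kappa/(1+\kappa)$, yielding $P_{\mathrm{succ}}\ge 0.8(1-e^{-3/2})>0.62$. Your framing via $\prob[\bot]\le 0.2+0.8\,e^{-3/2}$ is the complementary version of the paper's $P_{\mathrm{succ}}\ge f_m\cdot p_m$, and you are if anything slightly more explicit than the paper about the independence between the counter's randomness and the loop's hashing randomness.
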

\begin{proof}
If $|R_F| \le 1+(1+\kappa)\mathrm{pivot}$, the theorem holds trivially. Suppose $|R_F| > 1+(1+\kappa)\mathrm{pivot}$ and let $P_{\mathrm{succ}}$ denote the probability that a run of the algorithm {\UniGen} succeeds. Let 
$p_i,$ such that $(q-3 \le i \le q)$ denote the conditional probability that {\UniGen} 
($F$, $\varepsilon$, $X$) terminates in iteration $i$ of the repeat-until loop (line 11-16) 
with $\frac{\mathrm{pivot}}{1+\kappa} \le |R_{F,h,\alpha}| \le 
1+(1+\kappa)\mathrm{pivot}$, given $|R_F| > 1+(1+\kappa)\mathrm{pivot}$.
Therefore, $P_{\mathrm{succ}} = \sum_{i=q-3}^{q} p_{i} \prod_{j = q-3}^{i} (1-p_{j})$. Let $f_m = \prob [ q-3 \le m \le q]$.  
Therefore, $P_{\mathrm{succ}} \ge p_{m} f_m \ge 0.8p_{m}$. The theorem 
is now proved by using Theorem \ref{theorem:chernoff-hoeffding} to show 
that $p_m \ge 1-e^{-3/2} \ge 0.77$.\\
   For every $y \in \{0,1\}^{n}$ and for every $\alpha \in \{0,1\}^{m}$, define an indicator variable $\nu_{y,\alpha}$ as follows: $\nu_{y,\alpha} = 1$ if $h(y) = \alpha$, and $\nu_{y,\alpha} = 0$ otherwise. Let us fix $\alpha$ and $y$ and choose $h$ uniformly at random from $H_{xor}(n ,m,3)$. The random choice of $h$ induces a probability distribution on $\nu_{y,\alpha}$, such that $\prob[\nu_{y,\alpha} = 1] = \prob[h(y) = \alpha] = 2^{-m}$ and $\expect[\nu_{y,\alpha}] = \prob[\nu_{y,\alpha} = 1] = 2^{-m}$. In addition 3-wise independence of hash functions chosen from $H_{xor}(n ,m,3)$ implies that for every distinct $y_a,y_b,y_c \in R_F$, the random variables $\nu_{y_a,\alpha}, \nu_{y_b,\alpha}$ and $\nu_{y_c,\alpha}$ are 3-wise independent. 
    
Let $\Gamma_\alpha = \sum_{y \in R_F} \nu_{y, \alpha}$ and
$\mu_\alpha = \expect\left[\Gamma_\alpha\right]$.  Clearly,
$\Gamma_\alpha = |R_{F, h, \alpha}|$ and $\mu_\alpha = \sum_{y \in
  R_F} \expect\left[\nu_{y, \alpha}\right] = 2^{-m}|R_F|$.
Since $|R_F| > \mathit{pivot}$ and $i-l > 0$, using the expression for
$\mathit{pivot}$, we get $3 \le \left\lfloor e^{-1/2}(1 +
\frac{1}{\varepsilon})^{-2}\cdot\frac{|R_F|}{2^m}
\right\rfloor$. Therefore, using Theorem
\ref{theorem:chernoff-hoeffding},
$\prob\left[\frac{|R_F|}{2^m}.\left(1-\frac{\kappa}{1+\kappa}\right) \leq
  |R_{F,h,\alpha}|\right.$ $\left.\leq
  (1+\kappa)\frac{|R_F|}{2^m} \right] > 1- e^{-3/2}$.
 Simplifying and noting that $\frac{\kappa}{1+\kappa} <
\kappa$ for all $\kappa > 0$, we obtain
$\prob\left[(1+\kappa)^{-1}\cdot \frac{|R_F|}{2^m} \leq
  |R_{F,h,\alpha}|\right.$ $\left.\leq (1+ \kappa)\cdot \frac{|R_F|}{2^m}
  \right] > 1- e^{-3/2}$. Also, $\frac{\mathrm{pivot}}{1+\kappa} = \frac{1}{1+\kappa}\frac{|R_F|-1}{2^m} \le \frac{|R_F|}{(1+\kappa)2^m}$ and $1+(1+\kappa)\mathrm{pivot} = 1+ \frac{(1+\kappa)(|R_F|-1)}{2^m} \ge \frac{(1+\kappa)|R_F|}{2^m}$. Therefore, $p_m = \prob[\frac{\mathrm{pivot}}{1+\kappa} \le |R_{F,h,\alpha}| \le 1+(1+\kappa)\mathrm{pivot}] \ge$ $\prob\left[(1+\kappa)^{-1}\cdot \frac{|R_F|}{2^m} \right.$ $\left.\leq
  |R_{F,h,\alpha}|\right.$ $\left.\leq (1+ \kappa)\cdot \frac{|R_F|}{2^m}
  \right] \ge 1-e^{-3/2}$. 
\end{proof}

Table ~\ref{table:completeresults} presents an extended version of Table ~\ref{table:exptresults}. 
We observe that {\UniGen}  is two to three orders
of magnitude more efficient than state-of-the-art random witness
generators, has probability of success almost $1$ over a large set of benchmarks arising from different domains.

\begin{table*}[t]
\caption{Extended Table of Runtime performance comparison of {\UniGen} and {\UniWit}}
\label{table:completeresults}
\begin{center}
\begin{tabular}{|c|c|c|c|c|c|c|c|}
\hline
\multicolumn{3}{|c|}{}& \multicolumn{3}{|c|}{\UniGen} &
\multicolumn{2}{|c|}{\UniWit} \\
\hline 
Benchmark & \#Variables & |S| &  \shortstack{Succ\\Prob} & \shortstack{Avg\\Run Time (s)}
&\shortstack{Avg\\XOR len}  & \shortstack{Avg\\Run Time (s)} & \shortstack{Avg\\XOR len} \\
\hline
Case121&291&48&1.0&0.19&24&56.09&145\\\hline
Case1\_b11\_1&340&48&1.0&0.2&24&755.97&170\\\hline
Case2\_b12\_2&827&45&1.0&0.33&22&--&--\\\hline
Case35&400&46&0.99&11.23&23&666.14&199\\\hline
Squaring1&891&72&1.0&0.38&36&--&--\\\hline
Squaring8&1101&72&1.0&1.77&36&5212.19&550\\\hline
Squaring10&1099&72&1.0&1.83&36&4521.11&550\\\hline
Squaring7&1628&72&1.0&2.44&36&2937.5&813\\\hline
Squaring9&1434&72&1.0&4.43&36&4054.42&718\\\hline
Squaring14&1458&72&1.0&24.34&36&2697.42&728\\\hline
Squaring12&1507&72&1.0&31.88&36&3421.83&752\\\hline
Squaring16&1627&72&1.0&41.08&36&2852.17&812\\\hline

s526\_3\_2&365&24&0.98&0.68&12&51.77&181\\\hline
s526a\_3\_2&366&24&1.0&0.97&12&84.04&182\\\hline
s526\_15\_7&452&24&0.99&1.68&12&23.04&225\\\hline
s1196a\_7\_4&708&32&1.0&6.9&16&833.1&353\\\hline
s1196a\_3\_2&690&32&1.0&7.12&16&451.03&345\\\hline
s1238a\_7\_4&704&32&1.0&7.26&16&1570.27&352\\\hline
s1238a\_15\_7&773&32&1.0&7.94&16&136.7&385\\\hline
s1196a\_15\_7&777&32&0.97&8.98&16&133.45&388\\\hline
s1238a\_3\_2&686&32&0.99&10.85&16&1416.28&342\\\hline
s953a\_3\_2&515&45&0.99&12.48&23&22414.86&257\\\hline
TreeMax&24859&19&1.0&0.52&10&49.78&12423\\\hline
LLReverse&63797&25&1.0&33.92&13&3460.58&31888\\\hline
LoginService2&11511&36&0.98&6.14&18&--&--\\\hline
EnqueueSeqSK&16466&42&1.0&32.39&21&--&--\\\hline
ProjectService3&3175&55&1.0&71.74&28&--&--\\\hline
Sort&12125&52&0.99&79.44&26&--&--\\\hline
Karatsuba&19594&41&1.0&85.64&21&--&--\\\hline
ProcessBean&4768&64&0.98&123.52&32&--&--\\\hline

tutorial3\_4\_31&486193&31&0.98&782.85&16&--&--\\\hline
\end{tabular} 
\end{center}
\end{table*}

\end{document}